\documentclass[11pt]{article}

\usepackage[margin=1in]{geometry}
\usepackage{amsthm,amssymb,amsmath}  
\usepackage{xspace,enumerate}
\usepackage[dvipsnames]{xcolor}
\usepackage[colorlinks=true,urlcolor=Blue,citecolor=Green,linkcolor=BrickRed]{hyperref}
\usepackage[capitalise]{cleveref}
\usepackage[utf8]{inputenc}
\usepackage{thmtools}
\usepackage{thm-restate}
\usepackage{authblk}

\title{Constructing Antidictionaries in Output-Sensitive Space}
\author[1]{Lorraine A.K. Ayad}
\author[2]{Golnaz Badkobeh}
\author[3]{Gabriele Fici}
\author[4]{Alice H\'eliou}
\author[5]{Solon P. Pissis}

\affil[1]{Department of Informatics, King's College London, London, UK\\
\texttt{lorraine.ayad@kcl.ac.uk}}
\affil[2]{Department of Computing, Goldsmiths University of London, London, UK\\
\texttt{g.badkobeh@gold.ac.uk}}
\affil[3]{Dipartimento di Matematica e Informatica, Universit\`{a} di Palermo, Palermo, Italy\\
\texttt{gabriele.fici@unipa.it}}
\affil[4]{Independent Researcher\\ 
\texttt{alice.heliou@gmail.com}}
\affil[5]{CWI, Amsterdam, The Netherlands\\ 
\texttt{solon.pissis@cwi.nl}}

\usepackage{microtype}%if unwanted, comment out or use option "draft"
\usepackage{amsmath,amsfonts}
\usepackage[cmbtt]{bold-extra}
\usepackage[T1]{fontenc}
\usepackage[utf8]{inputenc}
\usepackage{thm-restate}
\usepackage{comment}
\usepackage{forest}
\usepackage{xspace}
\usepackage{enumerate}
\usepackage{makecell}
\usepackage{listings}
\usepackage{multirow}
\usepackage{diagbox}
\usepackage{todonotes}
\usepackage{thmtools}
\usepackage[ruled,noline,noend]{algorithm2e}
\usepackage[capitalise]{cleveref}
\usepackage{graphics,adjustbox}
\usepackage{tabularx}
\usepackage{amssymb}
\usepackage{epsfig}
\usepackage{verbatim}
\usepackage{braket}
\usepackage[noadjust]{cite}
\usepackage{xspace}
\usepackage{bold-extra}
\usepackage[margin=1in]{geometry}

\theoremstyle{plain}
\newtheorem{theorem}{Theorem}
\newtheorem{lemma}{Lemma}  
\newtheorem{corollary}[theorem]{Corollary}  
\newtheorem{fact}{Fact}

\theoremstyle{definition}
\newtheorem{definition}{Definition}

\newtheorem{example}{Example}

\def\dd{\mathinner{.\,.}}
\newcommand{\cO}{\mathcal{O}}

\newcommand{\MAW}{\text{MAW}}
\newcommand{\MAWs}{\text{MAWs}}
\DeclareMathOperator{\Mina}{\ensuremath{\mathrm{M}}}
\DeclareMathOperator{\RMina}{\ensuremath{\mathrm{R}}}

  \def\dd{\mathinner{.\,.}}

\sloppy

\begin{document}
\date{}
\maketitle

\begin{abstract}
A word $x$ that is absent from a word $y$ is called {\em minimal} if all its proper factors occur in $y$. Given a collection of $k$ words $y_1,y_2,\ldots,y_k$ over an alphabet $\Sigma$, we are asked to compute the set $\Mina^{\ell}_{y_{1}\#\ldots\#y_{k}}$ of minimal absent words of length at most $\ell$ of word $y=y_1\#y_2\#\ldots\#y_k$, $\#\notin\Sigma$. In data compression, this corresponds to computing the antidictionary of $k$ documents. In bioinformatics, it corresponds to computing words that are absent from a genome of $k$ chromosomes. This computation generally requires $\Omega(n)$ space for $n=|y|$ using any of the plenty available $\cO(n)$-time algorithms. This is because an $\Omega(n)$-sized text index is constructed over $y$ which can be impractical for large $n$. We do the identical computation incrementally using output-sensitive space. This goal is reasonable when $||\Mina^{\ell}_{y_{1}\#\ldots\#y_{N}}||=o(n)$, for all $N\in[1,k]$. For instance, in the human genome, $n \approx 3\times 10^9$ but $||\Mina^{12}_{y_{1}\#\ldots\#y_{k}}|| \approx 10^6$. 
We consider a constant-sized alphabet for stating our results. 
We show that {\em all}  $\Mina^{\ell}_{y_{1}},\ldots,\Mina^{\ell}_{y_{1}\#\ldots\#y_{k}}$ can be computed in $\cO(kn+\sum^{k}_{N=1}||\Mina^{\ell}_{y_{1}\#\ldots\#y_{N}}||)$ total time using $\cO(\textsc{MaxIn}+\textsc{MaxOut})$ space, where $\textsc{MaxIn}$ is the length of the longest word in $\{y_1,\ldots,y_{k}\}$ and $\textsc{MaxOut}=\max\{||\Mina^{\ell}_{y_{1}\#\ldots\#y_{N}}||:N\in[1,k]\}$.
Proof-of-concept experimental results are also provided confirming our theoretical findings and justifying our contribution.
\end{abstract}

%\thispagestyle{empty}
%\clearpage
%\setcounter{page}{1}

\section{Introduction}

%What's a \MAW -- What are the combinatorial bounds
The word $x$ is an \textit{absent word} of the word
$y$ if it does not occur in $y$. The absent word $x$ of $y$ is called \textit{minimal} if and only if all its proper factors occur in $y$. The set of all minimal absent words for a word $y$ is denoted by $\Mina_y$. The set of all minimal absent words of length at most $\ell$ of a word $y$ is denoted by $\Mina^\ell_y$.  For example, if $y=\texttt{abaab}$, then $\Mina_y=\{\texttt{aaa}, \texttt{aaba}, \texttt{bab}, \texttt{bb}\}$ and $\Mina^3_y=\{\texttt{aaa},\texttt{bab}, \texttt{bb}\}$. 
The upper bound on the number of minimal absent words is $\cO(\sigma n)$~\cite{DBLP:journals/ipl/CrochemoreMR98}, where $\sigma$ is the size of the alphabet and $n$ is the length of $y$, and this is tight for integer alphabets~\cite{Charalampopoulos2018}; in fact, for large alphabets, such as when $\sigma\geq\sqrt{n}$, this bound is also tight even for minimal absent words having the same length~\cite{DBLP:journals/almob/AlmirantisCGIMP17}. 

%What are the algorithmic upper bounds
 State-of-the-art algorithms compute all minimal absent words of $y$ in $\cO(\sigma n)$ time~\cite{DBLP:journals/ipl/CrochemoreMR98,DBLP:journals/bmcbi/BartonHMP14} or in $\cO(n + |\Mina_y|)$ time~\cite{DBLP:conf/mfcs/FujishigeTIBT16,SPIRE2018} for integer alphabets. There also exist space-efficient data structures based on the Burrows-Wheeler transform of $y$ that can be applied for this computation~\cite{DBLP:conf/esa/BelazzouguiCKM13,DBLP:journals/algorithmica/BelazzouguiC17}. In many real-world applications of minimal absent words, such as in data compression~\cite{DCA,DBLP:conf/sccc/CrochemoreN02,DBLP:conf/dcc/FialaH08,DBLP:conf/isita/OtaM10}, in sequence comparison~\cite{Charalampopoulos2018,SPIRE2018}, in on-line pattern matching~\cite{DBLP:conf/fct/CrochemoreHKMPR17}, or in identifying pathogen-specific signatures~\cite{DBLP:journals/bioinformatics/SilvaPCPF15}, only a subset of minimal absent words may be considered, and, in particular, the minimal absent words of length (at most) $\ell$.
 Since, in the worst case, the number of minimal absent words of $y$ is $\Theta(\sigma n)$, $\Omega(\sigma n)$ space is required to represent them explicitly. In~\cite{SPIRE2018}, the authors presented an $\cO(n)$-sized data structure for outputting minimal absent words of a specific length in optimal time for integer alphabets.

%What's our motivation
The problem with existing algorithms for computing minimal absent words is that they make use of $\Omega(n)$ space; and the same amount is required even if one is merely interested in the minimal absent words of length at most $\ell$. This is because all of these algorithms construct global data structures, such as the suffix array~\cite{DBLP:journals/bmcbi/BartonHMP14}. In theory, this problem can be addressed by using the external memory algorithm for computing minimal absent words presented in~\cite{DBLP:journals/bioinformatics/HeliouPP17}. The I/O-optimal version of this algorithm, however, requires a lot of external memory to build the global data structures for the input~\cite{DBLP:conf/cpm/KarkkainenKP15a}. One could also use the algorithm of~\cite{DBLP:conf/spire/FujishigeTH18} that computes $\Mina^\ell_y$ in $\cO(n+|\Mina^\ell_y|)$ time using $\cO(\min\{n,\ell z\})$ space, where $z$ is the size of the LZ77 factorisation of $y$. This algorithm also requires constructing the truncated DAWG, a type of global data structure which could take space $\Omega(n)$. Thus, in this paper, we investigate whether $\Mina^\ell_y$ can be computed efficiently in output-sensitive space. As $y$ can be ``decomposed'' into a collection of $k$ words---with a suitable overlap of length $\ell$ so as not to lose information---we consider the following, general, computational problem.

%What's the problem formally speaking
\paragraph{Problem} Given $k$ words $y_1,y_2,\ldots,y_k$ over an alphabet $\Sigma$ and an integer $\ell>0$, compute the set $\Mina^{\ell}_{y_{1}\#\ldots\#y_{k}}$ of minimal absent words of length at most $\ell$ of $y=y_1\#y_2\#\ldots\#y_k$, $\#\notin\Sigma$.\\

In data compression, this scenario corresponds to computing the antidictionary of $k$ documents~\cite{DCA, DBLP:conf/sccc/CrochemoreN02}. In bioinformatics, it corresponds to computing words that are absent from a genome of $k$ chromosomes. As discussed above, this computation generally requires $\Omega(n)$ space for $n=|y|$. We do the identical computation incrementally using output-sensitive space. This goal is reasonable when $||\Mina^{\ell}_{y_{1}\#\ldots\#y_{N}}||=o(n)$, for all $N\in[1,k]$. In the human genome, $n \approx 3\times 10^9$ but $||\Mina^{12}_{y_{1}\#\ldots\#y_{k}}|| \approx 10^6$, where $k$ is the total number of chromosomes. 

%What's the result
\paragraph{Our Results} 
Antidictionary-based compressors work on $\Sigma=\{\texttt{0},\texttt{1}\}$ and in bioinformatics we have $\Sigma=\{\texttt{A},\texttt{C},\texttt{G},\texttt{T}\}$; we thus consider a constant-sized alphabet for stating our results. 
%We show that if $\cO(|y_{\max_{N-1}}|+||\Mina^{\ell}_{y_{1}\#\ldots\#y_{N-1}}||)$ space is made available at step $N-1$, where $y_{\max_N}$ is the longest word in $\{y_1,\ldots,y_{N}\}$, we can compute $\Mina^{\ell}_{y_{1}\#\ldots\#y_{N}}$ from $\Mina^{\ell}_{y_{1}\#\ldots\#y_{N-1}}$ at step $N$ using $\cO(|y_{\max_N}|+||\Mina^{\ell}_{y_{1}\#\ldots\#y_{N}}||)$ space. 
We show that {\em all}  $\Mina^{\ell}_{y_{1}},\ldots,\Mina^{\ell}_{y_{1}\#\ldots\#y_{k}}$ can be computed in $\cO(kn+\sum^{k}_{N=1}||\Mina^{\ell}_{y_{1}\#\ldots\#y_{N}}||)$ total time using $\cO(\textsc{MaxIn}+\textsc{MaxOut})$ space, where $\textsc{MaxIn}$ is the length of the longest word in $\{y_1,\ldots,y_{k}\}$ and $\textsc{MaxOut}=\max\{||\Mina^{\ell}_{y_{1}\#\ldots\#y_{N}}||:N\in[1,k]\}$.
Proof-of-concept experimental results are provided confirming our theoretical findings and justifying our contribution.
\section{Preliminaries}

%The basics
We generally follow~\cite{DBLP:books/daglib/0020103}. An {\em alphabet} $\Sigma$ is a finite ordered non-empty set of elements called {\em letters}. A {\em word} is a sequence of elements of $\Sigma$. 
The set of all words over $\Sigma$ of length at most $\ell$ is denoted by $\Sigma^{\leq\ell}$. We fix a constant-sized alphabet $\Sigma$, i.e., $|\Sigma|=\cO(1)$. Given a word $y=uxv$ over $\Sigma$, we say that $u$ is a {\em prefix} of $y$, $x$ is a {\em factor} (or subword) of $y$, and $v$ is a {\em suffix} of $y$. We also say that $y$ is a {\em superword} of $x$. A factor $x$ of $y$ is called {\em proper} if $x\neq y$. %A {\em border} of a non-empty word $y$ of length $|y|=n$ is any proper prefix of $y$ that is also a suffix of $y$. In the {\em border table} $B[0 \dd n-1]$ of $y$, $B[i]$ is the length of the longest border of $y[0 \dd i]$. The border table of word $y$ can be computed in time $\cO(n)$~\cite{DBLP:books/daglib/0020103}.

%MAWs
Given a word $y$ over $\Sigma$, the set of \emph{minimal absent words} ($\MAWs$) of $y$ is defined as
\begin{align*}
        \Mina_y = &\{aub \mid a,b\in\Sigma, \mbox{$au$ and $ub$ are factors of $y$ but $aub$ is not}\}\\ & \cup\{c\in \Sigma \mid \mbox{$c$ does not occur in $y$}\}.
\end{align*}

For instance, over $\Sigma=\{\texttt{a,b,c}\}$, for $y=\texttt{ab}$ we have $\Mina_y=\{\texttt{aa,bb,ba,c}\}$. MAWs of length 1 for $y$ can be found in $\cO(|y|+|\Sigma|)=\cO(|y|)$ time using $\cO(|\Sigma|)=\cO(1)$ working space, and so, in what follows, we focus on the computation of MAWs of length at least 2.

%Suffix tree and matching statistics
The \textit{suffix tree} $\mathcal{T}(y)$ of a non-empty word $y$ of length $n$ is the compact trie representing all suffixes of $y$~\cite{DBLP:books/daglib/0020103}. The \textit{branching} nodes of the trie as well as the \textit{terminal} nodes, that correspond to non-empty suffixes of $y$, become {\em explicit} nodes of the suffix tree, while the other nodes are {\em implicit}. We let $\mathcal{L}(v)$ denote the \textit{path-label} from the root node to node $v$. We say that node $v$ is path-labeled $\mathcal{L}(v)$; i.e., the concatenation of the edge labels along the path from the root node to $v$. Additionally, $\mathcal{D}(v)= |\mathcal{L}(v)|$ is used to denote the \textit{word-depth} of node $v$. A node $v$ such that the path-label $\mathcal{L}(v) = y[i\dd n-1]$, for some $0 \leq i \leq n-1$, is {\em terminal} and is also labeled with index $i$. Each factor of $y$ is uniquely represented by either an explicit or an implicit node of $\mathcal{T}(y)$ called its \emph{locus}. The \textit{suffix-link} of a node $v$ with path-label $\mathcal{L}(v)= a w$ is a pointer to the node path-labeled $w$, where $a \in \Sigma$ is a single letter and $w$ is a word. The suffix-link of $v$ exists by construction if $v$ is a non-root branching node of $\mathcal{T}(y)$. 
The {\em matching statistics} of a word $x[0\dd |x|-1]$ with respect to word $y$ is an array $\text{MS}_x[0\dd |x|-1]$, where $\text{MS}_x[i]$ is a pair $(f_i, p_i)$ such that (i) $x[i  \dd i + f_i-1]$ is the longest prefix of $x[i  \dd |x|-1]$ that is a factor of $y$; and (ii) $y[p_i\dd p_i + f_i-1] = x[i\dd i + f_i-1]$~\cite{def}. $\mathcal{T}(y)$ is constructible in time $\cO(n)$, and, given $\mathcal{T}(y)$, we can compute $\text{MS}_x$ in time $\cO(|x|)$~\cite{def}.

\section{Combinatorial Properties}

For convenience, we consider the following setting. Let $y_1, y_2$ be words over the alphabet $\Sigma$ and let $y_3=y_1\#y_2$, with $\# \notin \Sigma$. Let $\ell$ be a positive integer and set $\Mina^{\ell}_{y_1} = \Mina_{y_1} \cap \Sigma^{\leq\ell}$ and $\Mina^{\ell}_{y_2} = \Mina_{y_2} \cap \Sigma^{\leq\ell}$. We want to construct $\Mina^{\ell}_{y_3} = \Mina_{y_3} \cap \Sigma^{\leq\ell}$. Let  $x\in\Mina^{\ell}_{y_3}$. We have two cases: %The goal is to express the set  $\Mina^{\ell}_{y_3}$ using only the sets  $\Mina^{\ell}_{y_1}$ and  $\Mina^{\ell}_{y_2}$. 

\begin{description}
\item[Case 1]: $x \in \Mina^{\ell}_{y_1} \cup \Mina^{\ell}_{y_2}$;
\item[Case 2]: $x \notin \Mina^{\ell}_{y_1} \cup \Mina^{\ell}_{y_2}$.
\end{description}

The following auxiliary fact follows directly from the minimality property.

\begin{fact}\label{f1}
Word $x$ is absent from word $y$ if and only if $x$ is a superword of a $\MAW$ of $y$.
\end{fact}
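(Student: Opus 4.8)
The plan is to prove the two implications separately; both follow by unfolding the definitions of absence and of minimality, with no auxiliary machinery needed.

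For the ``if'' direction I would argue that absence propagates upward to superwords. Suppose $x$ is a superword of some $w\in\Mina_y$, i.e.\ $w$ is a factor of $x$. By definition a $\MAW$ of $y$ is in particular absent from $y$, so $w$ does not occur in $y$. If $x$ occurred in $y$, then every factor of $x$ would occur in $y$ as well; in particular $w$ would occur in $y$, a contradiction. Hence $x$ is absent from $y$.

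For the ``only if'' direction I would use a shortest-absent-factor argument. Assume $x$ is absent from $y$; note $x$ is non-empty, since the empty word occurs in every word. Among the factors of $x$ that are absent from $y$ — a non-empty collection, as $x$ itself qualifies — pick one of minimum length $w$, so $|w|\ge 1$. Every proper factor $u$ of $w$ is a factor of $x$ strictly shorter than $w$, hence occurs in $y$ by minimality of $|w|$. Now split on $|w|$: if $|w|\ge 2$, write $w=aub$ with $a,b\in\Sigma$; then $au$ and $ub$ are proper factors of $w$, so they occur in $y$, while $w=aub$ does not, giving $w\in\Mina_y$ by the first part of the definition of $\Mina_y$; if $|w|=1$, then $w$ is a single letter not occurring in $y$, so $w\in\Mina_y$ by the second part. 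In either case $w$ is a $\MAW$ of $y$ that is a factor of $x$, i.e.\ $x$ is a superword of $w$.

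Since the statement reduces to a direct case analysis on the definitions, I do not anticipate a real obstacle; the only point requiring mild care is making the shortest-absent-factor argument match the two-part definition of $\Mina_y$, i.e.\ handling the length-$1$ case (a missing letter) separately from the length-$\ge 2$ case (the $aub$ characterisation).
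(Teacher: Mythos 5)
Your proof is correct, and it is exactly the standard argument the paper alludes to when it says the fact ``follows directly from the minimality property'' (the paper omits the proof entirely). Both directions are handled properly, including the necessary split between the length-$1$ and length-$\ge 2$ cases to match the two-part definition of $\Mina_y$.
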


For Case 1, we prove the following lemma.

\begin{lemma}[Case 1]\label{lem:C1} 
 A word  $x\in\Mina^{\ell}_{y_1}$ (resp.~$x\in \Mina^{\ell}_{y_2}$) belongs to $\Mina^{\ell}_{y_3}$ if and only if $x$ is a superword of a word in $\Mina^{\ell}_{y_2}$ (resp.~in $\Mina^{\ell}_{y_1}$). 
\end{lemma}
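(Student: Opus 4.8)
The plan is to characterise membership in $\Mina^{\ell}_{y_3}$ for a word $x \in \Mina^{\ell}_{y_1}$ purely in terms of whether $x$ is still absent from $y_3$, since minimality of $x$ as an absent word of $y_1$ almost transfers for free, and then to use \Cref{f1} applied to $y_2$ to turn ``$x$ is absent from $y_3$'' into ``$x$ is a superword of a $\MAW$ of $y_2$'' of the right length. The case $x \in \Mina^{\ell}_{y_2}$ is symmetric, so I would only write out the $x \in \Mina^{\ell}_{y_1}$ direction.

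First I would record the key observation that a word $w$ is a factor of $y_3 = y_1\#y_2$ with $\#\notin\Sigma$ if and only if $w$ is a factor of $y_1$ or a factor of $y_2$: any occurrence of $w\in\Sigma^*$ in $y_3$ cannot straddle the separator $\#$, so it lies entirely inside $y_1$ or entirely inside $y_2$. Using this, suppose $x = aub \in \Mina^{\ell}_{y_1}$ with $a,b\in\Sigma$ (the length-$1$ case $x=c$ is handled analogously and is easier). Since $au$ and $ub$ are factors of $y_1$, they are factors of $y_3$; hence for $x$ to be a $\MAW$ of $y_3$, the only remaining requirement is that $x$ itself is \emph{not} a factor of $y_3$. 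So the condition ``$x\in\Mina^{\ell}_{y_3}$'' is equivalent to ``$x$ is absent from $y_3$'' (given $x\in\Mina^{\ell}_{y_1}$); note $x$ has length at most $\ell$ automatically. By the factor observation, $x$ is absent from $y_3$ iff $x$ is absent from $y_1$ \emph{and} absent from $y_2$; but $x\in\Mina^{\ell}_{y_1}$ already guarantees $x$ is absent from $y_1$, so this reduces to: $x$ is absent from $y_2$.

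Now I would invoke \Cref{f1}: $x$ is absent from $y_2$ iff $x$ is a superword of some $w\in\Mina_{y_2}$. It remains to upgrade this to $w\in\Mina^{\ell}_{y_2}$, i.e.\ to check $|w|\le\ell$. This follows because $w$ is a subword of $x$ and $|x|\le\ell$ (as $x\in\Mina^{\ell}_{y_3}$ in the forward direction, and as $x\in\Mina^{\ell}_{y_1}$ in the backward direction), so $|w|\le|x|\le\ell$. Conversely, if $x$ is a superword of some $w\in\Mina^{\ell}_{y_2}$, then by \Cref{f1} $x$ is absent from $y_2$, hence (combined with $x$ absent from $y_1$) absent from $y_3$, hence $x\in\Mina^{\ell}_{y_3}$ by the equivalence above. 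This closes the loop.

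The only mild subtlety — and the place I would be most careful — is making the ``superword of a $\MAW$'' bookkeeping precise in both length directions at once: in one direction the length bound on $w$ comes from the hypothesis $x\in\Mina^{\ell}_{y_1}$, and in the other from the conclusion $x\in\Mina^{\ell}_{y_3}$, and one must make sure \Cref{f1} is being applied to $y_2$ (not $y_1$) and that the separator argument is invoked to justify ``factor of $y_3$ $\iff$ factor of $y_1$ or $y_2$'' for words over $\Sigma$. None of this is technically deep; it is essentially an exercise in unwinding the definitions, with \Cref{f1} doing the real work of converting absence into the existence of a short $\MAW$ witness inside $x$.
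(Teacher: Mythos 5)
Your proof is correct and follows essentially the same route as the paper's: both reduce membership of $x\in\Mina^{\ell}_{y_1}$ in $\Mina^{\ell}_{y_3}$ to absence of $x$ from $y_2$, and then apply \cref{f1} to convert that absence into the existence of a $\MAW$ of $y_2$ inside $x$. You are in fact slightly more explicit than the paper on two points it leaves implicit --- that no factor of $y_3$ over $\Sigma$ can straddle the separator $\#$, and that the witness $w\in\Mina_{y_2}$ automatically satisfies $|w|\le\ell$ because it is a factor of $x$ --- but the underlying argument is the same.
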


\begin{proof}
Let  $x\in\Mina^{\ell}_{y_1}$ (the case  $x\in\Mina^{\ell}_{y_2}$ is symmetric). Suppose first that $x$ is a superword of a word in $\Mina^{\ell}_{y_2}$, that is, there exists $v\in\Mina^{\ell}_{y_2}$ such that $v$ is a factor of $x$. If $v=x$, then $x\in \Mina^{\ell}_{y_1}\cap \Mina^{\ell}_{y_2}$ and therefore, using the definition of $\MAW$, $x\in \Mina^{\ell}_{y_3}$. If $v$ is a proper factor of $x$, then $x$ is an absent word of $y_2$ and again, by definition of $\MAW$, $x\in \Mina^{\ell}_{y_3}$.

Suppose now that $x$ is not a superword of any word in $\Mina^{\ell}_{y_2}$. Then $x$ is not absent in $y_2$ by Fact~\ref{f1}, and hence in $y_3$, thus $x$ cannot belong to $\Mina^{\ell}_{y_3}$.
\end{proof}

It should be clear that the statement of Lemma~\ref{lem:C1} implies, in particular, that all words in $\Mina^{\ell}_{y_1} \cap \Mina^{\ell}_{y_2}$ belong to $\Mina^{\ell}_{y_3}$. Furthermore, Lemma~\ref{lem:C1} motivates us to introduce the \emph{reduced set of $\MAWs$} of $y_1$ with respect to $y_2$ as the set $\RMina^{\ell}_{y_1}$ obtained from $\Mina^{\ell}_{y_1}$  after removing those words that are superwords of words in $\Mina^{\ell}_{y_2}$. The set $\RMina^{\ell}_{y_2}$ is defined analogously.

% \begin{example}[Inclusion-Exclusion]
% Let $y_1=\texttt{abaab}$ and $y_2=\texttt{bbaaab}$ and $\ell=5$.

% We have that $\Mina^{\ell}_{y_1}=\{\texttt{bb,aaa,bab,aaba}\}$ and $\Mina^{\ell}_{y_2}=\{\texttt{bbb,aaaa,baab,aba,bab,abb}\}.$

% We further have that $\Mina^{\ell}_{y_1} \triangle \Mina^{\ell}_{y_2}=\{\texttt{bb,aaa,aaba,bbb,aaaa,baab,aba,abb}\}$.

% We apply Lemma~\ref{lem:C1} as follows:
% $\texttt{bb}$ is a factor of $\texttt{bbb}$ so $\texttt{bb} \notin \Minah^{\ell}_{y_3}$ and $\texttt{bbb} \in \Minah^{\ell}_{y_3}$. $\texttt{aaa}$ is a factor of $\texttt{aaaa}$ so $\texttt{aaa} \notin \Minah^{\ell}_{y_3}$ and $\texttt{aaaa} \in \Minah^{\ell}_{y_3}$. $\texttt{aba}$ is a factor of $\texttt{aaba}$ so $\texttt{aba} \notin \Minah^{\ell}_{y_3}$ and $\texttt{aaba} \in \Minah^{\ell}_{y_3}$.
% $\texttt{bb}$ is a factor of $\texttt{abb}$ so $\texttt{bb} \notin \Minah^{\ell}_{y_3}$ and $\texttt{abb} \in \Minah^{\ell}_{y_3}$. Moreover $\texttt{baab} \in \Mina^{\ell}_{y_2}$ is not a superword of any factor in $\Mina^{\ell}_{y_1}$ so $\texttt{baab} \notin \Minah^{\ell}_{y_3}$.\qed
% \end{example}

\begin{example}
Let $y_1=\texttt{abaab}$, $y_2=\texttt{bbaaab}$ and $\ell=5$.
We have $\Mina^{\ell}_{y_1}=\{\texttt{bb,aaa,bab,aaba}\}$ and $\Mina^{\ell}_{y_2}=\{\texttt{bbb,aaaa,baab,aba,bab,abb}\}.$
The word $\texttt{bab}$ is contained in  $\Mina^{\ell}_{y_1}\cap \Mina^{\ell}_{y_2}$ so it belongs to $\Mina^{\ell}_{y_3}$. The word $\texttt{aaba}\in \Mina^{\ell}_{y_1}$ is a superword of $\texttt{aba}\in \Mina^{\ell}_{y_2}$ hence $\texttt{aaba}\in \Mina^{\ell}_{y_3}$. On the other hand, the words $\texttt{bbb}$, $\texttt{aaaa}$ and $\texttt{abb}$ are superwords of words in $\Mina^{\ell}_{y_1}$, hence they belong to $\Mina^{\ell}_{y_3}$. The remaining \MAWs{} are not superwords of \MAWs{} of the other word. The reduced sets are therefore $\RMina^{\ell}_{y_1}=\{\texttt{bb},\texttt{aaa}\}$ and $\RMina^{\ell}_{y_2}=\{\texttt{baab},\texttt{aba}\}$.
In conclusion, we have for Case 1 that
$\Mina^{\ell}_{y_3}\cap (\Mina^{\ell}_{y_1}\cup \Mina^{\ell}_{y_2} )=\{\texttt{aaaa,bab,aaba,abb,bbb}\}$.\qed
\end{example}

%Case 2

We now investigate the set $ \Mina^{\ell}_{y_3}\setminus ( \Mina^{\ell}_{y_1}\cup  \Mina^{\ell}_{y_2})$ (Case 2).

\begin{fact}\label{fct:factors} Let $x = aub$, $a,b\in \Sigma$, be such that $x \in \Mina^{\ell}_{y_3}$ and $x \notin \Mina^{\ell}_{y_1} \cup \Mina^{\ell}_{y_2}$. Then $au$ occurs  in $y_1$ but not in $y_2$ and $ub$ occurs  in $y_2$ but not in $y_1$, or vice versa. 
\end{fact}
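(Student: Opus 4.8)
The plan is to analyze the structure of a word $x = aub \in \Mina^{\ell}_{y_3} \setminus (\Mina^{\ell}_{y_1} \cup \Mina^{\ell}_{y_2})$ by using the defining property of $\MAWs$: both $au$ and $ub$ occur in $y_3 = y_1\#y_2$, but $aub$ does not. First I would observe that since $x$ has length at most $\ell$ and contains no occurrence of $\#$ (otherwise $x$ could not be extended to appear across the boundary, and more to the point $au, ub$ would have to individually avoid $\#$ as they are factors of $y_3$; any factor of $y_3$ containing $\#$ that is short enough to be a $\MAW$ candidate would be handled separately — in fact one shows $\#$-containing $\MAWs$ of $y_3$ are not in Case 2), every occurrence of $au$ and of $ub$ in $y_3$ is actually an occurrence in $y_1$ or in $y_2$. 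So $au$ occurs in $y_1$ or in $y_2$, and likewise $ub$.

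The key step is to rule out the possibility that $au$ and $ub$ both occur in the \emph{same} $y_i$. Suppose for contradiction that both $au$ and $ub$ occur in $y_1$ (the case of $y_2$ is symmetric). Then, by the definition of $\MAW$, either $aub$ occurs in $y_1$ — contradicting the fact that $aub$ does not occur in $y_3 \supseteq y_1$ — or $aub$ is absent from $y_1$ while both its length-$(|x|-1)$ factors occur, which together with $|aub| \le \ell$ means $aub \in \Mina^{\ell}_{y_1}$, contradicting $x \notin \Mina^{\ell}_{y_1} \cup \Mina^{\ell}_{y_2}$. Hence $au$ and $ub$ cannot both occur in $y_1$, nor both in $y_2$. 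Combined with the previous paragraph — each of $au$, $ub$ occurs in $y_1$ or in $y_2$ — we conclude that one of them occurs in $y_1$ but not in $y_2$, and the other occurs in $y_2$ but not in $y_1$; i.e., $au$ occurs in $y_1$ but not $y_2$ and $ub$ occurs in $y_2$ but not $y_1$, or vice versa. This is exactly the claim.

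I expect the main obstacle to be the bookkeeping around the separator $\#$: one must argue carefully that an occurrence of a short factor ($au$ or $ub$, each of length at most $\ell - 1$) in $y_3$ that does \emph{not} straddle a $\#$ is genuinely an occurrence in $y_1$ or in $y_2$, and that the case where $x$ itself contains $\#$ either does not arise in Case 2 or is vacuous. One clean way is to note that if $au$ (say) contained $\#$, then since $\#$ occurs exactly once in $y_3$, $au$ would occur exactly once in $y_3$, spanning the boundary; the same would hold for any superword, and in particular one can check directly that such an $x$ would already have a proper factor absent from $y_1$ or $y_2$ unless it reduces to a Case 1 situation — so it is safe to assume $a,b \in \Sigma$ and $u \in \Sigma^*$, which the statement already does by writing $x = aub$ with $a,b \in \Sigma$. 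With that reduction in hand, the rest is the short contradiction argument above using only the $\MAW$ definition and the containment $y_1, y_2 \sqsubseteq y_3$.
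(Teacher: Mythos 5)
Your argument is correct and is essentially the one the paper relies on: the paper states Fact~\ref{fct:factors} without proof, but the identical reasoning (both $au$ and $ub$ are factors of $y_3$, yet they cannot both be factors of the same $y_i$, since otherwise $aub$ would either occur in $y_i$ or lie in $\Mina^{\ell}_{y_i}$) appears verbatim as the case split in the proof of Theorem~\ref{thm:general}. Your side discussion of the separator $\#$ is the only slightly loose part, but the clean way to dispatch it is simply that if $u$ contained the unique $\#$, the single occurrences of $au$ and $ub$ in $y_3$ would be forced to align at that $\#$, making $aub$ occur in $y_3$ and contradicting $x\in\Mina^{\ell}_{y_3}$.
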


% \begin{proof}
% Suppose $w \notin \Mina^{\ell}_{y_1} \triangle \Mina^{\ell}_{y_2}$. Further suppose $aub \in \Minah^{\ell}_{y_3}$, $au,ub \in F(y_1) \cup F(y_2)$, and $aub \notin F(y_1) \cup F(y_2)$.
% We cannot have $aub \notin F(y_1)$ because then either $aub \notin F(y_1) \implies aub \in \Mina^{\ell}_{y_1}$ giving a contradiction or $aub \in F(y_1) \implies aub \in F(y_3) \implies aub \notin \Minah^{\ell}_{y_3}$. Hence we obtain the lemma.\qed
% \end{proof}

%\begin{observation}\label{obs:crucial}
%Let $w \in \Mina^{\ell}_{y_3} \setminus (\Mina^{\ell}_{y_1}\cup  \Mina^{\ell}_{y_2})$ (Case 2). Assume WLOG that $w[0\dd |w|-2]$ occurs in $y_1$ but not in $y_2$ and $w[1\dd |w|-1]$ occurs in $y_2$ but not in $y_1$. (This can be vice verca.) Since $w[0\dd |w|-2]$ does not occur in $y_2$ there is a \MAW $x_2 \in \Mina^{\ell}_{y_2}$ that is a factor of $w[0\dd |w|-2]$. Since $w[1\dd |w|-1]$ occurs in $y_2$, $x_2$ is not a factor of $w[1\dd |w|-1]$. Consequently, $x_2$ is {\em\bf a prefix} of $w[0\dd |w|-2]$. Analogously, there is an $x_1 \in \Mina^{\ell}_{y_1}$ that is {\em\bf a suffix} of $w[1\dd |w|-1]$. Furthermore, $x_1$ and $x_2$ cannot be one a factor of the other.
%\end{observation}

The rationale for generating the reduced sets should become clear with the next lemma.

% \begin{lemma}\label{lem:reduced}
% Let $w \in \Mina^{\ell}_{y_3} \setminus (\Mina^{\ell}_{y_1}\cup  \Mina^{\ell}_{y_2})$ (Case 2). Assume WLOG that $w[0\dd |w|-2]$ occurs in $y_1$ but not in $y_2$ and $w[1\dd |w|-1]$ occurs in $y_2$ but not in $y_1$. (This can be vice versa.) Then there exist $x_2 \in \RMina^{\ell}_{y_2}$  and $x_1 \in \RMina^{\ell}_{y_1}$ such that $x_2$ is a {\em prefix} of $w[0\dd |w|-2]$ and $x_1$ is a {\em suffix} of $w[1\dd |w|-1]$. 
% \end{lemma}

% \begin{proof}
% Since $w[0\dd |w|-2]$ does not occur in $y_2$ there is a \MAW $x_2 \in \Mina^{\ell}_{y_2}$ that is a factor of $w[0\dd |w|-2]$. Since $w[1\dd |w|-1]$ occurs in $y_2$, $x_2$ is not a factor of $w[1\dd |w|-1]$. Consequently, $x_2$ is {\em a prefix} of $w[0\dd |w|-2]$. Analogously, there is an $x_1 \in \Mina^{\ell}_{y_1}$ that is {\em a suffix} of $w[1\dd |w|-1]$. Furthermore, $x_1$ and $x_2$ cannot be one a factor of the other.

% If $x_2$ is a superword of some $u_1 \in \Mina^{\ell}_{y_1}$, then by Lemma~\ref{lem:C1}, $x_2 \in \Mina^{\ell}_{y_3}$. Since $x_2$ is a prefix of $w[0\dd |w|-2]$,  $w[0\dd |w|-2]$ does not occur in $y_1$---contradiction. If $x_1$ is a superword of some $u_2 \in \Mina^{\ell}_{y_2}$, then by Lemma~\ref{lem:C1}, $x_1 \in \Mina^{\ell}_{y_3}$. Since $x_2$ is a suffix of $w[1\dd |w|-1]$,  $w[1\dd |w|-1]$ does not occur in $y_2$---contradiction. Thus $x_2\in \RMina^{\ell}_{y_2}$ and $x_1 \in \RMina^{\ell}_{y_1}$. \qed
% \end{proof}

\begin{lemma}[Case 2]\label{lem:cartesian}
Let $x \in \Mina^{\ell}_{y_3} \setminus (\Mina^{\ell}_{y_1}\cup  \Mina^{\ell}_{y_2})$. Then $x$ has a prefix $x_i$ in $\RMina^{\ell}_{y_i}$ and a suffix $x_j$ in $ \RMina^{\ell}_{y_j}$, for $i,j$ such that $\{i,j\}=\{1,2\}$.
\end{lemma}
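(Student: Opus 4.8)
The plan is to analyze the structure of $x=aub$ with $a,b\in\Sigma$, $x\in\Mina^{\ell}_{y_3}\setminus(\Mina^{\ell}_{y_1}\cup\Mina^{\ell}_{y_2})$, using Fact~\ref{fct:factors} as the starting point. By Fact~\ref{fct:factors}, without loss of generality $au$ occurs in $y_1$ but not in $y_2$, while $ub$ occurs in $y_2$ but not in $y_1$ (the other orientation is symmetric). Since $au$ does not occur in $y_2$, Fact~\ref{f1} tells us $au$ is a superword of some $\MAW$ $x_1$ of $y_2$; moreover $|x_1|\leq|au|\leq|x|\leq\ell$, so $x_1\in\Mina^{\ell}_{y_2}$. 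I want $x_1$ to be a \emph{prefix} of $x$, so the first step is to argue that $x_1$ can be taken to be a prefix of $au$. This is where the minimality of $x=aub$ in $y_3$ is essential: $ub$ occurs in $y_3$ (it occurs in $y_2$), hence $ub$ occurs in $y_2$, so every proper prefix of $x$ that contains the last letter of $au$ — in particular $au$ itself — must be examined carefully. Concretely, any $\MAW$ of $y_2$ that is a factor of $au$ but not a prefix of $au$ would be a proper factor of $x$ not touching position $0$; I will show such a $\MAW$ is in fact a factor of $ub$ as well, contradicting that $ub$ occurs in $y_2$. Thus some $\MAW$ of $y_2$ that is a prefix of $au$ exists; call it $x_1$, and note $x_1$ is a prefix of $x$.

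The second step upgrades $x_1\in\Mina^{\ell}_{y_2}$ to $x_1\in\RMina^{\ell}_{y_2}$, i.e.\ I must show $x_1$ is not a superword of any word in $\Mina^{\ell}_{y_1}$. Suppose it were: then $x_1$ would be absent from $y_1$ by Fact~\ref{f1}. But $x_1$ is a prefix of $au$, and $au$ occurs in $y_1$, so $x_1$ occurs in $y_1$ — contradiction. Hence $x_1\in\RMina^{\ell}_{y_2}$, and by the WLOG orientation this is the set $\RMina^{\ell}_{y_j}$ with $j=2$. Symmetrically, since $ub$ occurs in $y_2$ but not in $y_1$, $ub$ is a superword of a $\MAW$ of $y_1$, and by the analogous argument — this time looking at suffixes of $ub$ and using that $au$ occurs in $y_1$ to rule out $\MAWs$ of $y_1$ sitting strictly inside $au$, plus the fact that $ub$ occurs in $y_2$ to promote it to the reduced set — there is a $\MAW$ $x_2$ of $y_1$ that is a \emph{suffix} of $ub$, hence a suffix of $x$, with $x_2\in\RMina^{\ell}_{y_1}$. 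Setting $i=1$ gives the claim.

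The main obstacle is the first step: showing that the $\MAW$ of $y_2$ inside $au$ can be chosen to be a \emph{prefix} of $au$ (and not buried strictly inside it), and symmetrically for the suffix. The key combinatorial observation driving this is that $x=aub$ is minimal in $y_3$, so both $au$ and $ub$ occur in $y_3$; combined with Fact~\ref{fct:factors} (which pins down exactly which of $y_1,y_2$ each occurs in), any $\MAW$ of $y_2$ that is a factor of $au$ and does not start at position $0$ of $x$ is necessarily a factor of the overlap region $u$, hence a factor of $ub$, hence occurs in $y_2$ — a contradiction. I would state this overlap argument once and invoke it twice by symmetry. A minor point to handle cleanly: the edge case $|x_1|=|au|$ (the $\MAW$ equals all of $au$) is fine since $au$ is a proper prefix of $x$; and the possibility $u=\varepsilon$, where $au=a$ and $ub=b$ degenerate to single letters, is covered because then $a$ is a $\MAW$ of $y_2$ of length $1$ that is trivially both a prefix and in the reduced set, and likewise for $b$.
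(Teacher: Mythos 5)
Your proof is correct and follows essentially the same route as the paper's: use Fact~\ref{fct:factors} to split $x=aub$, observe that a $\MAW$ of $y_2$ that is a factor of $au$ cannot avoid position $0$ of $x$ (else it would sit inside $ub$, which occurs in $y_2$), hence must be a prefix, and argue symmetrically for the suffix. Your justification of membership in the reduced sets (the prefix $\MAW$ of $y_2$ is a factor of $au$, which occurs in $y_1$, so by Fact~\ref{f1} it cannot be a superword of any word of $\Mina^{\ell}_{y_1}$) is in fact spelled out more carefully than the paper's one-line remark that the two words ``cannot be factors one of another.''
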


\begin{proof}
Let $x=aub$, $a,b\in\Sigma$, be a word in $\Mina^{\ell}_{y_3} \setminus (\Mina^{\ell}_{y_1}\cup  \Mina^{\ell}_{y_2})$. By Fact~\ref{fct:factors}, $au$ occurs  in $y_1$ but not in $y_2$ and $ub$ occurs  in $y_2$ but not in $y_1$, or vice versa. Let us assume the first case holds (the other case is symmetric). Since $au$ does not occur in $y_2$, there is a \MAW{} $x_2 \in \Mina^{\ell}_{y_2}$ that is a factor of $au$. Since $ub$ occurs in $y_2$, $x_2$ is not a factor of $ub$. Consequently, $x_2$ is a prefix of $au$.

Analogously, there is an $x_1 \in \Mina^{\ell}_{y_1}$ that is a suffix of $ub$. Furthermore, $x_1$ and $x_2$ cannot be factors one of another. Inspect Figure~\ref{fig:remark} in this regard.
\end{proof}

\begin{figure}[!t]
\begin{center}
\includegraphics[width=9cm]{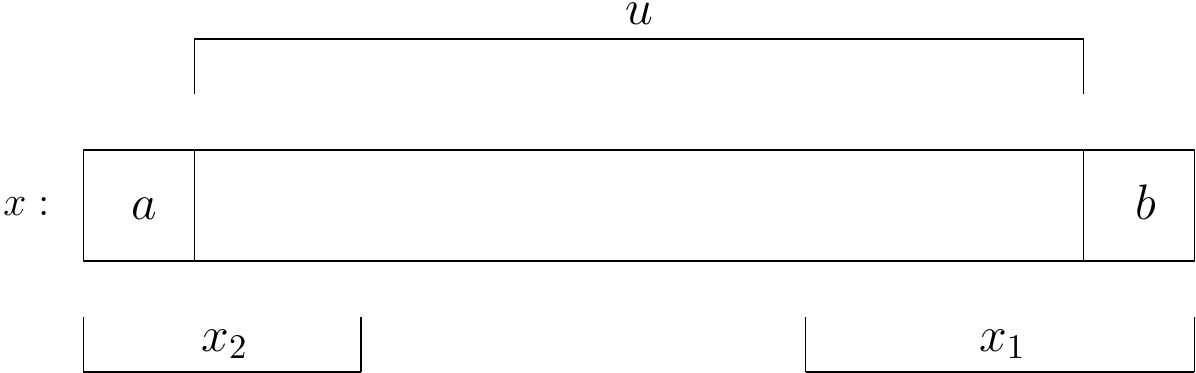}
\end{center}
\caption{$x_2$ occurs in $y_1$ but not in $y_2$; $x_1$ occurs in $y_2$ but not in $y_1$; therefore $aub$ does not occur in $y_1\#y_2$. By construction, $au$ occurs in $y_1$ and $ub$ occurs in $y_2$; therefore $aub$ is a Case 2 \MAW.}
\label{fig:remark}
\end{figure}
 
\begin{example}
Let $y_1=\texttt{abaab}$, $y_2=\texttt{bbaaab}$ and $\ell=5$. Consider $x=\texttt{abaaa}\in \Mina^{\ell}_{y_3} \setminus (\Mina^{\ell}_{y_1}\cup  \Mina^{\ell}_{y_2})$ (Case 2 \MAW).
We have that $\texttt{abaa}$ occurs in $y_1$ but not in $y_2$ and $\texttt{baaa}$ occurs in $y_2$ but not in $y_1$. Since $\texttt{abaa}$ does not occur in $y_2$, there is a \MAW{} $x_2 \in \RMina^{\ell}_{y_2}$ that is a factor of $\texttt{abaa}$. Since $\texttt{baaa}$ occurs in $y_2$, $x_2$ is not a factor of $\texttt{baaa}$. So $x_2$ is  a prefix of $\texttt{abaa}$ and this is $\texttt{aba}$.
Analogously, there is \MAW{} $x_1 \in \RMina^{\ell}_{y_1}$ that is a suffix of $\texttt{abaaa}$ and this is $\texttt{aaa}$. \qed
\end{example}

As a consequence of Lemma~\ref{lem:cartesian}, in order to construct the set $\Mina^{\ell}_{y_3} \setminus (\Mina^{\ell}_{y_1}\cup  \Mina^{\ell}_{y_2})$, we should consider all pairs $(x_i,x_j)$ with $x_i$ in $ \RMina^{\ell}_{y_i}$ and $x_j$ in $ \RMina^{\ell}_{y_j}$, $\{i,j\}=\{1,2\}$. In order to construct the final set $\Mina^{\ell}_{y_1\#\ldots\#y_N}$, we use incrementally Lemmas~\ref{lem:C1} and \ref{lem:cartesian}. We summarise the whole approach in the following general theorem, which forms the theoretical basis of our technique.

\begin{theorem}\label{thm:general}
Let $N>1$, and let $x\in \Mina^{\ell}_{y_1\#\ldots\#y_N}$. Then, either $x\in \Mina^{\ell}_{y_1\#\ldots\#y_{N-1}}\cup \Mina^{\ell}_{y_N}$ (Case 1 \MAWs) or, otherwise, $x\in \Mina^{\ell}_{y_i\#y_N}\setminus (\Mina^{\ell}_{y_i}\cup \Mina^{\ell}_{y_N})$ for some $i$. Moreover, in this latter case, $x$ has a prefix in $\RMina^{\ell}_{y_1\#\ldots\#y_{N-1}}$ and a suffix in $\RMina^{\ell}_{y_N}$, or the converse, i.e., $x$ has a prefix in $\RMina^{\ell}_{y_N}$ and a suffix in $\RMina^{\ell}_{y_1\#\ldots\#y_{N-1}}$ (Case 2 \MAWs).
 \end{theorem}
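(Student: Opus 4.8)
The plan is to reduce the statement to the two-word case already settled by Lemma~\ref{lem:C1}, Lemma~\ref{lem:cartesian} and Fact~\ref{fct:factors}. Write $z = y_1\#\ldots\#y_{N-1}$, so that the word $y_1\#\ldots\#y_N$ is literally $z\#y_N$. Two elementary observations come first. Since $\#\notin\Sigma$, a $\#$-free word occurs in $z\#y_N$ if and only if it occurs in $z$ or in $y_N$, and it occurs in $z$ if and only if it occurs in $y_i$ for some $i\in[1,N-1]$. And Lemma~\ref{lem:C1}, Lemma~\ref{lem:cartesian} and Fact~\ref{fct:factors} remain true verbatim when $y_1$ is replaced by $z$ (a word that itself contains occurrences of $\#$) and $y_2$ by $y_N$: a look at their proofs shows that the only ingredients used are $a,b\in\Sigma$, the fact that a $\#$-free factor of a concatenation $u\#v$ is a factor of $u$ or of $v$, and Fact~\ref{f1} --- never that the two halves are $\#$-free. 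I would state this as a one-line remark before invoking the lemmas.

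With that in hand, the dichotomy is immediate: either $x\in\Mina^\ell_{z}\cup\Mina^\ell_{y_N}$, which is exactly $\Mina^\ell_{y_1\#\ldots\#y_{N-1}}\cup\Mina^\ell_{y_N}$ (Case 1), or not (Case 2). In Case 2, applying Lemma~\ref{lem:cartesian} with $(z,y_N)$ in place of $(y_1,y_2)$ yields directly that $x$ has a prefix in $\RMina^\ell_{z}=\RMina^\ell_{y_1\#\ldots\#y_{N-1}}$ and a suffix in $\RMina^\ell_{y_N}$, or the converse; this is the ``moreover'' clause of the theorem. It remains only to produce an index $i\in[1,N-1]$ with $x\in\Mina^\ell_{y_i\#y_N}\setminus(\Mina^\ell_{y_i}\cup\Mina^\ell_{y_N})$.

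For this last point, write $x=aub$ with $a,b\in\Sigma$ (the length-$1$ case is trivial and falls in Case 1: a single letter is a $\MAW$ of $y_1\#\ldots\#y_N$ iff it is absent from every $y_j$, hence lies in both $\Mina^\ell_{y_1\#\ldots\#y_{N-1}}$ and $\Mina^\ell_{y_N}$). By Fact~\ref{fct:factors} applied to $(z,y_N)$, up to symmetry $au$ occurs in $z$ but not in $y_N$ while $ub$ occurs in $y_N$ but not in $z$; fix $i\in[1,N-1]$ with $au$ a factor of $y_i$. Then $au$ and $ub$ are both factors of $y_i\#y_N$, whereas $aub$ --- being a factor of neither $z$ nor $y_N$, hence of neither $y_i$ nor $y_N$ --- is not a factor of $y_i\#y_N$, so $x\in\Mina^\ell_{y_i\#y_N}$ (using $|x|\le\ell$). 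Moreover $au$ is not a factor of $y_N$, so $x$ is not a $\MAW$ of $y_N$; and $ub$ is not a factor of $z$, hence not of the block $y_i$, so $x$ is not a $\MAW$ of $y_i$. Therefore $x\in\Mina^\ell_{y_i\#y_N}\setminus(\Mina^\ell_{y_i}\cup\Mina^\ell_{y_N})$. The symmetric sub-case is identical once one notes that $\Mina^\ell_{y_i\#y_N}=\Mina^\ell_{y_N\#y_i}$ and that $\Mina^\ell_{y_i}\cup\Mina^\ell_{y_N}$ does not depend on the order.

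The work here is bookkeeping rather than ideas, and the two places to be careful are: (i) checking that the two-word results survive the substitution of $z$ for $y_1$ even though $z$ contains $\#$'s; and (ii) in Case 2, choosing the block index $i$ correctly and then verifying all four (non-)membership conditions for $y_i\#y_N$, each of which reduces to the principle that a $\#$-free word occurs in $z$ precisely when it occurs in one of the blocks $y_1,\ldots,y_{N-1}$. I expect (ii), i.e.\ pinning down $i$ and dispatching the $\Mina^\ell_{y_i\#y_N}\setminus(\Mina^\ell_{y_i}\cup\Mina^\ell_{y_N})$ membership, to be the only mildly delicate step.
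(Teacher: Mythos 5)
Your proof is correct, and at its core it follows the same route as the paper's: view $y_1\#\ldots\#y_N$ as $z\#y_N$ with $z=y_1\#\ldots\#y_{N-1}$, observe that the two maximal proper factors $x[0\dd m-2]$ and $x[1\dd m-1]$ of a Case~2 word must split between $z$ and $y_N$, and extract the extremal factors that land in the two reduced sets. The difference is one of packaging: you invoke Lemma~\ref{lem:cartesian} directly (after the --- correct --- remark that the two-word results survive the substitution of the $\#$-bearing word $z$ for $y_1$, since their proofs only use that a $\#$-free factor of a concatenation lies entirely in one of the parts), whereas the paper redoes the longest-prefix/longest-suffix construction inline inside the proof of Theorem~\ref{thm:general}. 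One genuine point in your favour: the theorem also asserts that a Case~2 word lies in $\Mina^{\ell}_{y_i\#y_N}\setminus(\Mina^{\ell}_{y_i}\cup \Mina^{\ell}_{y_N})$ for some $i$, and the paper's written proof never addresses that clause; your final paragraph (choose a block $y_i$ containing an occurrence of $au$, then verify that $au$ and $ub$ occur in $y_i\#y_N$ while $aub$ does not, and that $x$ fails minimality for each of $y_i$ and $y_N$ separately) supplies exactly the missing argument, and it is sound. So your write-up is, if anything, more complete than the published one.
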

\begin{proof}
Let $x\in \Mina^{\ell}_{y_1\#\ldots\#y_N}$ and $x\notin \Mina^{\ell}_{y_1\#\ldots\#y_{N-1}}\cup \Mina^{\ell}_{y_N}$. 
Then,
 $x\notin \Mina^{\ell}_{y_1\#\ldots\#y_{N-1}}$ and 
 $x \notin \Mina^{\ell}_{y_N}$.

\noindent
Let $x$ be a word of length $m$. By the definition of $\MAW$, $x[0\dd m-2]$ and $x[1\dd m-1]$ must both be factors of $y_1\#\ldots\#y_N$. However, both cannot be factors of $y_1\#\ldots\#y_{N-1}$ and both cannot be factors of $y_N$. Therefore, we have one of the two cases:
\begin{description}
\item[Case 1]: $x[0\dd m-2]$ is factor of $y_1\#\ldots\#y_{N-1}$ but not of $y_N$ and $x[1\dd m-1]$ is a factor of $y_N$ but not of $y_1\#\ldots\#y_{N-1}$.
\item[Case 2]: $x[0\dd m-2]$ is factor of $y_N$ but not of $y_1\#\ldots\#y_{N-1}$ and $x[1\dd m-1]$ is a factor of $y_1\#\ldots\#y_{N-1}$ but not of $y_N$.
\end{description}

These two cases are symmetric, thus only proof of Case 1 will be presented here.
If $x[0]$ does not occur in $y_N$ then  $x[0] \in \RMina^{\ell}_{y_N}$. Otherwise, let $x[0\dd t]$ be the longest prefix of $x[0\dd m-2]$ that is a factor of $y_N$. 

Because  $0 \leq t<m-1$ then $x[1\dd t+1]$ is a factor of $y_N$. Therefore, $x[0\dd t+1] \in \Mina^{\ell}_{y_N}$. In addition, all factors of $x[0\dd t+1]$ occur in $y_1\#\ldots\#y_{N-1}$, so $x[0\dd t+1] \in \RMina^{\ell}_{y_N}$.

Now, $x[1\dd m-1]$ does not occur in $y_1\#\ldots\#y_{N-1}$, so either $x[m-1]$ does not occur in $y_1\#\ldots\#y_{N-1}$ which means that $x[m-1] \in \RMina^{\ell}_{y_1\#\ldots\#y_{N-1}}$, or let $x[p\dd m-1]$ be the longest suffix of $x[1\dd m-1]$ that occurs in $y_1\#\ldots\#y_{N-1}$. 

Because $0<p \leq m-1$ then $x[p-1\dd m-2]$ occurs in $y_1\#\ldots\#y_{N-1}$, therefore $x[p-1\dd m-1] \in \Mina^{\ell}_{y_1\#\ldots\#y_{N-1}}$. Since all factors of $x[p-1\dd m-1]$ occur in $y_N$, we have $x[p-1\dd m-1] \in \RMina^{\ell}_{y_1\#\ldots\#y_{N-1}}$.
\end{proof}

\section{Algorithm}\label{sec:time-efficient}

%In this section, we provide our algorithm for computing $\Mina^\ell_y$.
Let us first introduce an algorithmic tool. In the \textit{weighted ancestor} problem, introduced in~\cite{DBLP:conf/cpm/FarachM96}, we consider a rooted tree $T$ with an integer weight function $\mu$ defined on the nodes. We require that the weight of the root is zero and the weight of any other node is strictly larger than the weight of its parent. A weighted ancestor query, given a node $v$ and an integer value $w\le \mu(v)$, asks for the highest ancestor $u$ of $v$ such that $\mu(u)\ge w$, i.e., such an ancestor $u$ that $\mu(u)\ge w$ and $\mu(u)$ is the smallest possible. When $T$ is the suffix tree of a word $y$ of length $n$, we can locate the locus of any factor $y[i\dd j]$ using a weighted ancestor query. We define the weight of a node of the suffix tree as the length of the word it represents. Thus a weighted ancestor query can be used for the terminal node decorated with $i$ to create (if necessary) and mark the node that corresponds to $y[i\dd j]$. 

\begin{theorem}[\cite{DBLP:journals/corr/BartonKLPR17}]\label{the:WLA}
Given a collection $Q$ of weighted ancestor queries on a weighted tree $T$ on $n$ nodes with integer weights up to $n^{\cO(1)}$, all the queries in $Q$ can be answered {\em off-line} in $\cO(n+|Q|)$ time.
\end{theorem}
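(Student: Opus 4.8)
The plan is to reduce the off-line weighted ancestor problem to the \emph{static-tree set union} problem of Gabow and Tarjan, in which the sequence of \textsc{union} operations is constrained to follow a tree fixed in advance. Throughout write $\mu$ for the weight function, and recall that $\mu$ strictly increases along every root-to-node path; hence the answer to a query $(v,w)$ is the unique ancestor $u$ of $v$ (allowing $u=v$) with $\mu(u)\ge w>\mu(\mathrm{parent}(u))$, and it exists because $w\le\mu(v)$.

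The key structural observation I would start from is that, for a fixed threshold $w$, the set $A_w=\{x:\mu(x)\ge w\}$ of \emph{active} nodes is a union of rooted subtrees of $T$: if $\mu(x)\ge w$ and $x'$ is a descendant of $x$ then $\mu(x')>\mu(x)\ge w$, so $x'\in A_w$. Therefore the answer to $(v,w)$ is precisely the root of the maximal active rooted subtree containing $v$ — the ``topmost'' active ancestor of $v$. This invites maintaining a disjoint-set structure whose classes are exactly the maximal active rooted subtrees, each named by its root, and answering $(v,w)$ by a single \textsc{find}$(v)$.

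Concretely, I would sweep the threshold from large to small. First, radix-sort in $\cO(n+|Q|)$ time the multiset made of the $n$ node weights together with the $|Q|$ query thresholds, in decreasing order, breaking ties so a node of weight $W$ comes before a query of threshold $W$; this step uses the hypothesis that weights are bounded by $n^{\cO(1)}$, so each value has $\cO(1)$ base-$n$ digits. Then process the sorted sequence: when a node $x$ is reached, \emph{activate} it by performing, for every child $c$ of $x$, a \textsc{union} of the class of $c$ into the class of $x$ and renaming the merged class $x$ — at that moment each child $c$ has already been activated, since $\mu(c)>\mu(x)$ and hence $c$'s event occurred earlier, so its class equals the subtree rooted at $c$, and after the merges the class of $x$ equals the subtree rooted at $x$, named $x$; when a query $(v,w)$ is reached, output \textsc{find}$(v)$. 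The tie-breaking rule guarantees that just before a query with threshold $w$ is answered the set $A_w$ is fully realized and the classes are exactly the maximal subtrees contained in $A_w$, each named by its root, so \textsc{find}$(v)$ returns the topmost ancestor of $v$ of weight $\ge w$, which is the answer. The degenerate case $w=0$ (answer: the root, whose weight is $0$) is handled uniformly by the same mechanism.

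For the running time, the sweep issues one \textsc{make-set} per node, exactly $n-1$ \textsc{union} operations (each tree edge $(\mathrm{parent}(c),c)$ is used once, when $\mathrm{parent}(c)$ is activated), and $|Q|$ \textsc{find} operations, and every union merges a node into its parent in the fixed tree $T$. This is exactly the static-tree set-union setting, which on a word RAM (via the microset / table-lookup technique) runs in $\cO(n+|Q|)$ total time; combined with the $\cO(n+|Q|)$ radix sort this gives the claimed bound. The main obstacle is securing the \emph{exact} linear bound rather than $\cO((n+|Q|)\,\alpha(n))$: a generic union-find does not suffice, and one must genuinely exploit that the union tree is known in advance; a secondary, minor point is that linearity of the preliminary sort relies on the polynomial weight bound stated in the hypothesis.
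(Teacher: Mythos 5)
The paper does not actually prove this statement---it is imported verbatim, with a citation, from the reference attached to the theorem---so there is no internal proof to compare against. Your argument is correct and is essentially the standard proof (and, to the best of my knowledge, the one given in the cited work): a decreasing sweep over the weights that maintains the maximal active subtrees via Gabow--Tarjan static-tree set union, answering each query $(v,w)$ by a single \textsc{find}$(v)$, with the polynomial bound on the weights used exactly where you say, namely to radix-sort the $n+|Q|$ events in $\cO(n+|Q|)$ time.
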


\subsection{The Algorithm}

At the $N$th step, we have in memory the set $\Mina^{\ell}_{y_1\#\ldots\#y_{N-1}}$.
Our algorithm works as follows:

\begin{enumerate}
\item We read word $y_N$ from the disk and compute $\Mina^{\ell}_{y_N}$ in time $\cO(|y_N|)$.
We output the words in the following constant-space form: $<i_1,i_2,\alpha>$ per word~\cite{DBLP:journals/bmcbi/BartonHMP14}; such that $y_N[i_1\dd i_2]\cdot \alpha\in \Mina^{\ell}_{y_N}$.

\item Here we compute Case 1 \MAWs. We apply Lemma~\ref{lem:C1} to construct set $M=\{w: w \in \Mina^{\ell}_{y_{1}\#\ldots\#y_{N}}, w\in \Mina^{\ell}_{y_{1}\#\ldots\#y_{N-1}} \cup \Mina^{\ell}_{y_N}\}$ and the sets $\RMina^{\ell}_{y_{1}\#\ldots\#y_{N-1}}, \RMina^{\ell}_{y_N}$ as follows. 

\begin{enumerate}
\item We first want to find the elements of $\Mina^{\ell}_{y_{1}\#\ldots\#y_{N-1}}$ that are superwords of any word $y_N[i_1\dd i_2]\cdot \alpha$. We build the generalised suffix tree $T_1=\mathcal{T}(\Mina^{\ell}_{y_{1}\#\ldots\#y_{N-1}}\cup\{y_N\})$~\cite{def}.
%During a DFS traversal of $T_1$ we mark each internal node of $T_1$ as having children only from $\Mina^{\ell}_{y_{1}\#\ldots\#y_{N-1}}$, only from $y_N$, or from both. 
We find the locus of the longest proper prefix $y_N[i_1\dd i_2]$ of each element of $\Mina^{\ell}_{y_N}$ in $T_1$ via answering off-line a batch of weighted ancestor queries (Theorem~\ref{the:WLA}). From there on, we spell $\alpha$ and mark the corresponding node on $T_1$, if any. 
%After all $<i_1,i_2,\alpha>$ have been processed, we do a DFS traversal on $T_1$ to find if the terminal nodes corresponding to the elements of $\Mina^{\ell}_{y_{1}\#\ldots\#y_{N-1}}$ have any ancestor node that is marked, thus finding if the element of $\Mina^{\ell}_{y_{1}\#\ldots\#y_{N-1}}$ are superwords of any $y_N[i_1\dd i_2]\cdot \alpha$. This takes $\cO(|y_N| + ||\Mina^{\ell}_{y_{1}\#\ldots\#y_{N-1}}||)$ time.
After processing all $<i_1,i_2,\alpha>$ in the same manner, we traverse $T_1$ to gather all occurrences (starting positions) of words $y_N[i_1\dd i_2]\cdot \alpha$ in the elements of $\Mina^{\ell}_{y_{1}\#\ldots\#y_{N-1}}$, thus finding the elements of $\Mina^{\ell}_{y_{1}\#\ldots\#y_{N-1}}$ that are superwords of any $y_N[i_1\dd i_2]\cdot \alpha$. By definition, no \MAW{} $y_N[i_1\dd i_2]\cdot \alpha$ is a prefix of another \MAW{} $y_N[i_1'\dd i_2']\cdot \alpha'$, thus the marked nodes form pairwise disjoint subtrees, and the whole process takes time $\cO(|y_N| + ||\Mina^{\ell}_{y_{1}\#\ldots\#y_{N-1}}||)$, the size of $T_1$.

\begin{comment}
\begin{example}
Let $\ell=3$, $\Mina^{\ell}_{y_{1}\#\ldots\#y_{N-1}}=\{\texttt{aaa,bb}\}$, $y_N=\texttt{abba}$ and $\Mina^{\ell}_{y_N}=\{\texttt{aa,aba,bbb,bab}\}$.
We build the suffix tree of \texttt{aaa}$\#_1$\texttt{bb}$\#_2$\texttt{abba}$\#_3$. From $\Mina^{\ell}_{y_N}$ we only find the path-label $\texttt{aa}$ in the suffix tree (see the node in red below). By gathering all occurrences of $\texttt{aa}$ in the subtree rooted at that node (terminal nodes), we conclude that a word from $\Mina^{\ell}_{y_{1}\#\ldots\#y_{N-1}}$ ($\texttt{aaa}$) is a superword of $\texttt{aa}$.

\begin{center}
\includegraphics[width=9cm]{st.pdf}
\end{center}
\qed
\end{example}
\end{comment}

\item We next want to check if the words $y_N[i_1\dd i_2]\cdot \alpha$ are superwords of any element of $\Mina^{\ell}_{y_{1}\#\ldots\#y_{N-1}}$. We first sort all tuples $<i_1,i_2,\alpha>$ using radixsort and then check this using the matching statistics algorithm for $y_N$ with respect to $\mathcal{T}(\Mina^{\ell}_{y_{1}\#\ldots\#y_{N-1}})$ considering the tuples in ascending order (from left to right) at the same time. By definition, no element in $\Mina^{\ell}_{y_{1}\#\ldots\#y_{N-1}}$ is a factor of another element in the same set. Thus if a factor of $y_N[i_1\dd i_2]\cdot \alpha$ corresponds to an element in $\Mina^{\ell}_{y_{1}\#\ldots\#y_{N-1}}$ this is easily located in $\mathcal{T}(\Mina^{\ell}_{y_{1}\#\ldots\#y_{N-1}})$ while running the matching statistics algorithm. The whole process takes $\cO(|y_N| + ||\Mina^{\ell}_{y_{1}\#\ldots\#y_{N-1}}||)$ time: $\cO(||\Mina^{\ell}_{y_{1}\#\ldots\#y_{N-1}}||)$ time to construct the suffix tree and a further $\cO(|y_N|)$ time for the matching statistics algorithm and for processing the $\cO(|y_N|)$ tuples.

\begin{comment}
\begin{example}
Let $\ell=3$, $\Mina^{\ell}_{y_{1}\#\ldots\#y_{N-1}}=\{\texttt{aaa,bb}\}$, $y_N=\texttt{abba}$ and $\Mina^{\ell}_{y_N}=\{\texttt{aa,aba,bbb,bab}\}$.
We build the suffix tree of \texttt{aaa}$\#_1$\texttt{bb}$\#_2$. The sorted list of tuples from $\Mina^{\ell}_{y_N}$ is $<0,0,\texttt{a}>,<0,1,\texttt{a}>,<1,2,\texttt{b}>,<2,3,\texttt{b}>$. The longest match at position $0$ of $y_N=\texttt{abba}$ is $\texttt{a}$. We extend this match with $\texttt{a}$, due to $<0,0,\texttt{a}>$, and check that there is no full match of a word from $\Mina^{\ell}_{y_{1}\#\ldots\#y_{N-1}}$. We remove the second $\texttt{a}$ (decrease the word-depth by one) and follow the suffix-link of the node that takes us to the root. The longest match at position $1$ is $\texttt{bb}$. This takes us to a full match of a word from $\Mina^{\ell}_{y_{1}\#\ldots\#y_{N-1}}$ ($\texttt{bb}$), which means that $\texttt{bbb}$, represented by $<1,2,\texttt{b}>$, is a superword of word $\texttt{bb}$ from $\Mina^{\ell}_{y_{1}\#\ldots\#y_{N-1}}$. 
\begin{center}
\includegraphics[width=8cm]{st2.pdf}
\end{center}
\qed
\end{example}
\end{comment}

\end{enumerate}
We create set $\RMina^{\ell}_{y_{1}\#\ldots\#y_{N-1}}$ explicitly since it is a subset of $\Mina^{\ell}_{y_{1}\#\ldots\#y_{N-1}}$. We create set $\RMina^{\ell}_{y_{N}}$ implicitly: every element $x \in \RMina^{\ell}_{y_N}$ is stored as a tuple $<i_1,i_2,\alpha>$ such that $x=y_N[i_1\dd i_2]\cdot\alpha$. We store every element of $\{x_2: x_2 \in M \cap \Mina^{\ell}_{y_{N}}\}$ with the same representation. All other elements of $M$ are stored explicitly.

\item Construct the suffix tree of $y_N$ and use it to locate all occurrences of words in $\RMina^{\ell}_{y_1\# \ldots \#y_{N-1}}$ in $y_N$ and store the occurrences as pairs (starting position, ending position). This step can be done in time $\cO(|y_N| + ||\RMina^{\ell}_{y_{1}\#\ldots\#y_{N-1}}||)$.
By definition, no element in $\RMina^{\ell}_{y_1\# \ldots \#y_{N-1}}$ is a prefix of another element in $\RMina^{\ell}_{y_1\# \ldots \#y_{N-1}}$, and thus this can be done within the claimed time complexity.

\item For every $i\in[1,N-1]$, we perform the following to compute Case 2 \MAWs:
\begin{enumerate}
\item Read word $y_i$ from the disk. Construct the suffix tree $T_{x}$ of word $x=y_i\#y_N$ in time $\cO(|y_i| + |y_N|)$. Use $T_{x}$ to locate all occurrences of elements of $\RMina^{\ell}_{y_N}$ in $y_i$ and store the occurrences as pairs (starting position, ending position). This step can be done in time $\cO(|y_i| + |y_N|)$ similar to step 2. By definition, no element in $\RMina^{\ell}_{y_N}$ is a prefix of another element in $\RMina^{\ell}_{y_N}$, and thus this can be done within the claimed time complexity.

\item During a bottom-up traversal of $T_{x}$ mark, at each explicit node of $T_{x}$, the smallest starting position of the subword represented by that node, and the largest starting position of the same subword. This can be done in time $\cO(|y_i| + |y_N|)$ by propagating upwards the labels of the terminal nodes (starting positions of suffixes) and updating the smallest and largest positions analogously.

\item Compute the set $\Mina^{\ell}_{y_i\#y_N}$ and output the words in the following constant-space form: $<a,i_1,i_2,b>$ per word; such that $a\cdot x[i_1\dd i_2]\cdot b$ is a $\MAW$. This can be done in time $\cO(|y_i| + |y_N|)$.
\item For each element of $\Mina^{\ell}_{y_i\#y_N}$, we need to locate the node representing word $ax[i_1\dd i_2]=au$ and the node representing word $x[i_1\dd i_2]b=ub$. This can be done in time $\cO(|y_i| + |y_N|)$ via answering off-line a batch of weighted ancestor queries (Theorem~\ref{the:WLA}). 
At this point, we have located the two nodes on $T_x$. We assign a pointer from the stored starting position $g$ of $au$ to the ending position $f$ of $ub$, only if $g$ is before $\#$ and $f$ is after $\#$ ($f$ can be trivially computed using the stored starting position of $ub$ and the length of $ub$). Conversely, we assign a pointer from the ending position $f$ of $ub$ to the stored starting position $g$ of $au$, only if $f$ is before $\#$ and $g$ is after $\#$. 
%compute the set $\Mina^{\ell}_{y_i\#y_N}\setminus (\Mina^{\ell}_{y_i}\cup \Mina^{\ell}_{y_N})$ and output it as a list of 4-uples (for each maw $aub$, output $(b_1,e_1,b_2,e_2) =$ starting position of one occurrence of $au$, ending position of the same occurrence of $au$, starting position of one occurrence of $ub$, ending position of the same occurrence of $ub$). But be careful that we want either $au$ in $y_i$ and $ub$ in $y_N$ or $au$ in $y_N$ and $ub$ in $y_i$ (this is always possible because $aub \notin (\Mina^{\ell}_{y_i}\cup \Mina^{\ell}_{y_N})$).
\item Suppose $au$ occurs in $y_i$ and $ub$ in $y_N$. We make use of the pointers as follows.
Recall steps 3 and 4(a) and check whether $au$ starts where a word $r_1$ of $\RMina^{\ell}_{y_N}$ starts and $ub$ ends where a word $r_2$ of $\RMina^{\ell}_{y_1\#\ldots\#y_{N-1}}$ ends. If this is the case and $|u| \geq \max\{|r_1|,|r_2|\}-1$, then by Theorem~\ref{thm:general} $aub$ is added to our output set $M$, otherwise discard it. Inspect Figure~\ref{fig:case2} in this regard. Conversely, if $au$ occurs in $y_N$ and $ub$ in $y_i$ check whether $au$ starts where a word $r_2$ of $\RMina^{\ell}_{y_1\#\ldots\#y_{N-1}}$ starts and whether $ub$ ends where a word $r_1$ of $\RMina^{\ell}_{y_N}$ ends. If this is the case and $|u| \geq \max\{|r_1|,|r_2|\}-1$, then $aub$ is added to $M$, otherwise discard it. 
\end{enumerate}
\end{enumerate}

\begin{figure}[!t]
\begin{center}
\includegraphics[width=10cm]{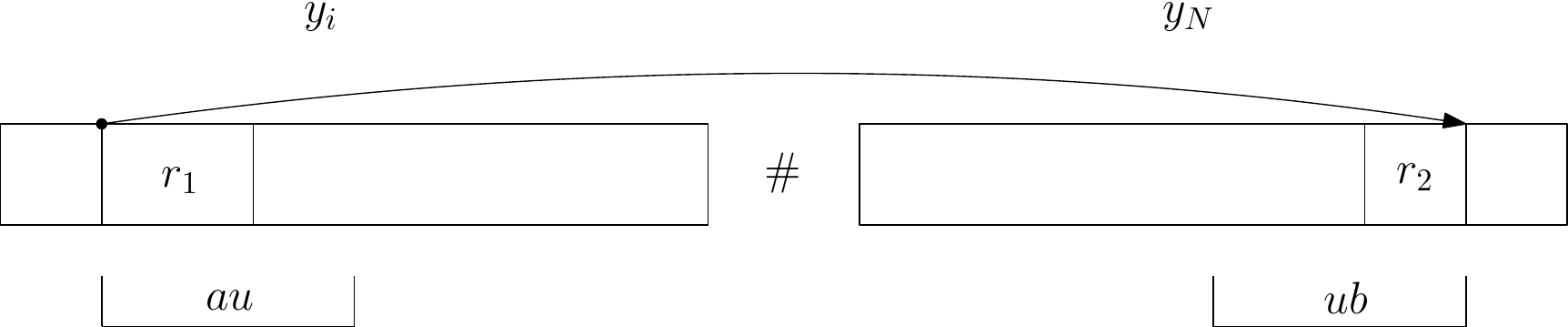}
\end{center}
\caption{$au$ starts where a word $r_1$ of $\RMina^{\ell}_{y_N}$ starts in $y_i$ and $ub$ ends where a word $r_2$ of $\RMina^{\ell}_{y_1\#\ldots\#y_{N-1}}$ ends in $y_N$. Moreover, if $|u| \geq \max\{|r_1|,|r_2|\}-1$, then $aub$ is a Case 2 \MAW.}
\label{fig:case2}
\end{figure}

Finally, we set $\Mina^{\ell}_{y_{1}\#\ldots\#y_{N}}=M$ as the output of the $N$th step. Let $\textsc{MaxIn}$ be the length of the longest word in $\{y_1,\ldots,y_k\}$ and $\textsc{MaxOut}=\max\{||\Mina^{\ell}_{y_{1}\#\ldots\#y_{N}}||:N\in[1,k]\}$. 

\begin{theorem}\label{the:main_result}
Given $k$ words $y_1,y_2,\ldots,y_k$ and an integer $\ell>0$, all $\Mina^{\ell}_{y_{1}},\ldots,\Mina^{\ell}_{y_{1}\#\ldots\#y_{k}}$ can be computed in $\cO(kn+\sum^{k}_{N=1}||\Mina^{\ell}_{y_{1}\#\ldots\#y_{N}}||)$ total time using $\cO(\textsc{MaxIn}+\textsc{MaxOut})$ space, where $n=|y_{1}\#\ldots\#y_{k}|$. 
\end{theorem}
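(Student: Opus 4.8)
The plan is to argue that the procedure described in the preceding subsection is correct and that its running time and space consumption match the claimed bounds; the theorem then follows by summing over the $k$ incremental steps. Correctness is essentially a bookkeeping argument on top of \Cref{thm:general}: at step $N$ the algorithm has $\Mina^{\ell}_{y_1\#\ldots\#y_{N-1}}$ available, and it must produce $\Mina^{\ell}_{y_1\#\ldots\#y_{N}}$. By \Cref{thm:general}, every word of the target set is either a Case~1 \MAW{}, handled in step~2 via \Cref{lem:C1} (intersecting with $\Mina^{\ell}_{y_N}$ and discarding the words that are superwords of a \MAW{} of the other collection), or a Case~2 \MAW{} of some $y_i\#y_N$ with $i<N$, which by the ``moreover'' part of \Cref{thm:general} has the form $a\cdot u\cdot b$ with a prefix in $\RMina^{\ell}_{y_1\#\ldots\#y_{N-1}}$ and a suffix in $\RMina^{\ell}_{y_N}$ (or the symmetric split). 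Step~4 enumerates exactly the words of each $\Mina^{\ell}_{y_i\#y_N}$ and, using the precomputed occurrence data from steps~3 and~4(a) together with the length condition $|u|\ge\max\{|r_1|,|r_2|\}-1$, retains precisely those satisfying the characterisation; so the output of step~$N$ is $\Mina^{\ell}_{y_1\#\ldots\#y_N}$.

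For the time bound I would account for each step of the $N$th iteration separately. Step~1 is $\cO(|y_N|)$ by the \MAW{} algorithm of~\cite{DBLP:journals/bmcbi/BartonHMP14} and the constant-space output form. Step~2 builds $T_1$ of size $\cO(|y_N|+||\Mina^{\ell}_{y_1\#\ldots\#y_{N-1}}||)$, then resolves a batch of $\cO(|y_N|)$ weighted ancestor queries off-line in $\cO(|T_1|+|y_N|)$ time by \Cref{the:WLA}, spells the trailing letters, and does a single linear traversal; substep~(b) adds a radixsort of $\cO(|y_N|)$ tuples and one run of the matching-statistics algorithm on $\mathcal{T}(\Mina^{\ell}_{y_1\#\ldots\#y_{N-1}})$, all within $\cO(|y_N|+||\Mina^{\ell}_{y_1\#\ldots\#y_{N-1}}||)$. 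Step~3 is a pattern-location pass over $\mathcal{T}(y_N)$, $\cO(|y_N|+||\RMina^{\ell}_{y_1\#\ldots\#y_{N-1}}||)$. Step~4 runs once for each $i\in[1,N-1]$, and each iteration is linear in $|y_i\#y_N|$: the suffix tree of $y_i\#y_N$, the two occurrence-location passes, the bottom-up propagation of smallest/largest starting positions, the \MAW{} computation for $y_i\#y_N$, and the off-line batch of weighted ancestor queries (again \Cref{the:WLA}) are each $\cO(|y_i|+|y_N|)$, and the number of candidate Case~2 \MAWs{} of $y_i\#y_N$ is $\cO(|y_i\#y_N|)$. Summing step~4 over $i$ gives $\cO\big(\sum_{i=1}^{N-1}(|y_i|+|y_N|)\big)=\cO(n)$, and summing all steps over $i$ gives a per-step cost of $\cO(n+||\Mina^{\ell}_{y_1\#\ldots\#y_N}||+||\Mina^{\ell}_{y_1\#\ldots\#y_{N-1}}||)$; summing over $N\in[1,k]$ yields the total $\cO(kn+\sum_{N=1}^{k}||\Mina^{\ell}_{y_1\#\ldots\#y_N}||)$.

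For the space bound, the key point — and the main subtlety of the argument — is that $\Mina^{\ell}_{y_N}$ is never stored explicitly, since it may be much larger than any of the sets being produced; instead its words are produced one tuple $<i_1,i_2,\alpha>$ at a time and only those surviving as Case~1 \MAWs{} (i.e. lying in $M\cap\Mina^{\ell}_{y_N}$) or needed as elements of $\RMina^{\ell}_{y_N}$ are kept, in the same constant-space implicit form referencing the in-memory string $y_N$. Hence the persistent data at step~$N$ consists of $y_N$ (length $\le\textsc{MaxIn}$), possibly one earlier $y_i$ (also $\le\textsc{MaxIn}$), the set $\Mina^{\ell}_{y_1\#\ldots\#y_{N-1}}$ and the set $M\subseteq\Mina^{\ell}_{y_1\#\ldots\#y_{N}}$ under construction (each of size $\cO(\textsc{MaxOut})$), the reduced set $\RMina^{\ell}_{y_1\#\ldots\#y_{N-1}}$ (a subset of the former) and the implicit $\RMina^{\ell}_{y_N}$ (a subset of $\Mina^{\ell}_{y_N}$ that, being a selection forced by \Cref{thm:general} to contribute to $\Mina^{\ell}_{y_1\#\ldots\#y_{N}}$, has size $\cO(\textsc{MaxOut})$). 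The transient structures — $T_1$, $\mathcal{T}(y_N)$, $\mathcal{T}(y_i\#y_N)$, the weighted-ancestor machinery, the border/occurrence tables of step~4(b)–(c) — are each of size $\cO(\textsc{MaxIn}+\textsc{MaxOut})$ and are released at the end of the substep that uses them; Case~2 candidates of $y_i\#y_N$ are filtered on the fly so that only the retained ones accumulate in $M$. Thus the working space at every point is $\cO(\textsc{MaxIn}+\textsc{MaxOut})$, and after step~$N$ we discard $\Mina^{\ell}_{y_1\#\ldots\#y_{N-1}}$ and continue with $M=\Mina^{\ell}_{y_1\#\ldots\#y_{N}}$, completing the induction.
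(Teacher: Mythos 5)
Your proposal follows essentially the same route as the paper: the paper's own proof is a short summary that bounds the running time by $\cO\big(\sum_{N=1}^{k}\sum_{i=1}^{N-1}(|y_N|+|y_i|)+\sum_{N=1}^{k}||\Mina^{\ell}_{y_1\#\ldots\#y_N}||\big)$, observes that the double sum is at most $2kn$, argues that the space is dominated by the longest input word plus the largest output set, and delegates correctness to \cref{lem:C1} and \cref{thm:general}; you carry out exactly this accounting, only in more detail. Two small slips in your write-up are worth fixing. First, your claim that summing step~4 over $i$ costs $\cO(n)$ for a fixed $N$ is not true in general: $\sum_{i=1}^{N-1}(|y_i|+|y_N|)$ contains the term $(N-1)\,|y_N|$, which can be $\Theta(kn)$ for a single $N$ when one word dominates the input; the correct statement is the paper's, namely that the double sum over all $N$ and $i$ together is at most $2kn$, so the $\cO(kn)$ total survives even though a single step need not be $\cO(n)$. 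Second, your justification for why $\RMina^{\ell}_{y_N}$ fits in memory is backwards: by \cref{lem:C1}, the reduced set consists precisely of those \MAWs{} of $y_N$ that do \emph{not} survive into $\Mina^{\ell}_{y_1\#\ldots\#y_N}$ (they are the ones removed because they are not superwords of earlier \MAWs{}), so its cardinality is not controlled by $\textsc{MaxOut}$. It is instead controlled by $||\Mina^{\ell}_{y_N}||$, whose number of elements is $\cO(|y_N|)$ over a constant-sized alphabet, combined with the constant-space tuple representation $<i_1,i_2,\alpha>$ referencing the in-memory word $y_N$; this costs $\cO(\textsc{MaxIn})$ space, which still fits within the claimed $\cO(\textsc{MaxIn}+\textsc{MaxOut})$ bound.
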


\begin{proof}
From the above discussion, the time is bounded by $\cO(\sum^{k}_{N=1}\sum^{N-1}_{i=1}(|y_N|+|y_i|)+\sum^{k}_{N=1}||\Mina^{\ell}_{y_{1}\#\ldots\#y_{N}}||)$. We can bound the first term as follows.
$$\sum^{k}_{N=1}\sum^{N-1}_{i=1}(|y_N|+|y_i|)\leq\sum^{k}_{N=1}\sum^{k}_{i=1}(|y_N|+|y_i|)=\sum^{k}_{N=1}\sum^{k}_{i=1}|y_N|+\sum^{k}_{N=1}\sum^{k}_{i=1}|y_i|=2k(|y_1|+\cdots+|y_k|).$$
Therefore the time is bounded by $\cO(kn+\sum^{k}_{N=1}||\Mina^{\ell}_{y_{1}\#\ldots\#y_{N}}||)$.

The space is bounded by the maximum time spent at a single step; namely, the length of the longest word in the collection plus the maximum total size of set elements across all output sets. 
Note that the total output size of the algorithm is the sum of all its output sets, that is $\sum^{k}_{N=1}||\Mina^{\ell}_{y_{1}\#\ldots\#y_{N}}||$, and \textsc{MaxOut} could come from any intermediate set.

The correctness of the algorithm follows from Lemma~\ref{lem:C1} and Theorem~\ref{thm:general}.
\end{proof}

%%%%%%%%%%%%%%%%%%%%%%%%%%%%%%%%%%%%%%%%%%Section Experiments%%%%%%%%%%%%%%%%%%%%%%%%%%%%%%

\section{Proof-of-Concept Experiments}

In this section, we do not directly compare against the fastest internal~\cite{DBLP:journals/bmcbi/BartonHMP14} or external~\cite{DBLP:journals/bioinformatics/HeliouPP17} memory implementations because the former assumes that we have the required amount of internal memory, and the latter assumes that we have the required amount of external memory to construct and store the global data structures for a given input dataset. If the memory for constructing and storing the data structures is available, these linear-time algorithms are surely faster than the method proposed here. In what follows, we rather show that our output-sensitive technique offers a space-time tradeoff, which can be usefully exploited for specific values of $\ell$, the maximal length of $\MAW$s we wish to compute.

The algorithm discussed in Section~\ref{sec:time-efficient} (with the exception of storing and searching the reduced set words explicitly rather than in the constant-space form previously described) has been implemented in the \texttt{C++} programming language\footnote{The implementation can be made available upon request.}. The correctness of our implementation has been confirmed against that of~\cite{DBLP:journals/bmcbi/BartonHMP14}.
As input dataset here we used the entire human genome (version hg38)~\cite{humangenome}, which has an approximate size of 3.1GB. The following experiment was conducted on a machine with an Intel Core i5-4690 CPU at 3.50 GHz and 128GB of memory running GNU/Linux. We ran the program by splitting the genome into $k={2,4,6,8,10}$ blocks and setting $\ell = 10,11,12$. Figure~\ref{exp} depicts the change in elapsed time and peak memory usage as $k$ and $\ell$ increase (space-time tradeoff). 

Graph (a) shows an increase of time as $k$ and $\ell$ increase; and graph (b) shows a decrease in memory as $k$ increases (as proved in Theorem~\ref{the:main_result}). Notice that the space to construct the block-wise data structures bounds the total space used for the specific $\ell$ values and that is why the memory peak is essentially the same for the $\ell$ values used. This can specifically be seen for $\ell=10$ where all words of length $10$ are present in the genome. The same dataset was used to run the fastest internal memory implementation for computing $\MAW$s~\cite{DBLP:journals/bmcbi/BartonHMP14} on the same machine. It took only $2242$ seconds to compute all $\MAW$s but with a peak memory usage of $60.80$GB. The results confirm our theoretical findings and justify our contribution.

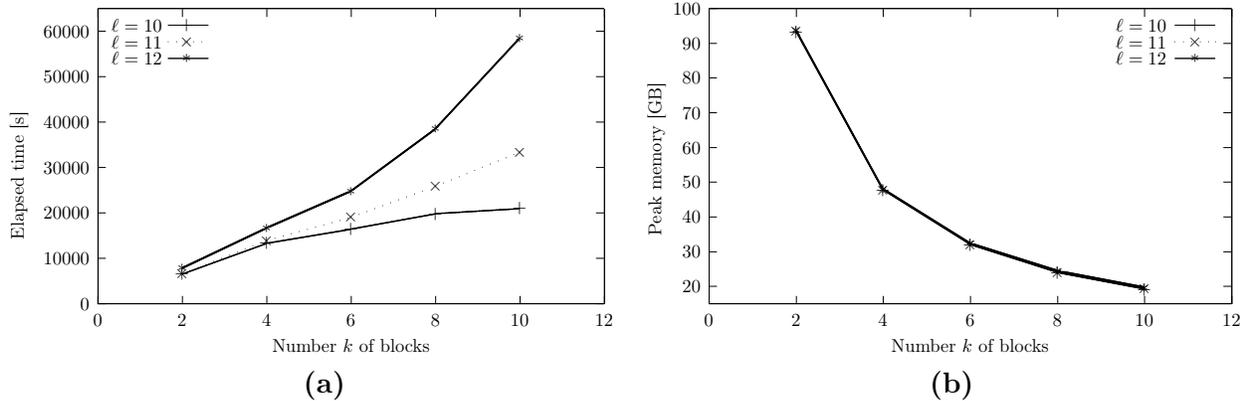
\begin{figure}[!t] %htbp
  \begin{minipage}[b]{0.5\textwidth}
    \centering
   \resizebox{1\textwidth}{!}{\setlength{\unitlength}{0.240900pt}
\ifx\plotpoint\undefined\newsavebox{\plotpoint}\fi
\sbox{\plotpoint}{\rule[-0.200pt]{0.400pt}{0.400pt}}%
\begin{picture}(1500,900)(0,0)
\sbox{\plotpoint}{\rule[-0.200pt]{0.400pt}{0.400pt}}%
\put(191.0,131.0){\rule[-0.200pt]{4.818pt}{0.400pt}}
\put(171,131){\makebox(0,0)[r]{$0$}}
\put(1419.0,131.0){\rule[-0.200pt]{4.818pt}{0.400pt}}
\put(191.0,243.0){\rule[-0.200pt]{4.818pt}{0.400pt}}
\put(171,243){\makebox(0,0)[r]{$10000$}}
\put(1419.0,243.0){\rule[-0.200pt]{4.818pt}{0.400pt}}
\put(191.0,355.0){\rule[-0.200pt]{4.818pt}{0.400pt}}
\put(171,355){\makebox(0,0)[r]{$20000$}}
\put(1419.0,355.0){\rule[-0.200pt]{4.818pt}{0.400pt}}
\put(191.0,467.0){\rule[-0.200pt]{4.818pt}{0.400pt}}
\put(171,467){\makebox(0,0)[r]{$30000$}}
\put(1419.0,467.0){\rule[-0.200pt]{4.818pt}{0.400pt}}
\put(191.0,579.0){\rule[-0.200pt]{4.818pt}{0.400pt}}
\put(171,579){\makebox(0,0)[r]{$40000$}}
\put(1419.0,579.0){\rule[-0.200pt]{4.818pt}{0.400pt}}
\put(191.0,691.0){\rule[-0.200pt]{4.818pt}{0.400pt}}
\put(171,691){\makebox(0,0)[r]{$50000$}}
\put(1419.0,691.0){\rule[-0.200pt]{4.818pt}{0.400pt}}
\put(191.0,803.0){\rule[-0.200pt]{4.818pt}{0.400pt}}
\put(171,803){\makebox(0,0)[r]{$60000$}}
\put(1419.0,803.0){\rule[-0.200pt]{4.818pt}{0.400pt}}
\put(191.0,131.0){\rule[-0.200pt]{0.400pt}{4.818pt}}
\put(191,90){\makebox(0,0){$0$}}
\put(191.0,839.0){\rule[-0.200pt]{0.400pt}{4.818pt}}
\put(399.0,131.0){\rule[-0.200pt]{0.400pt}{4.818pt}}
\put(399,90){\makebox(0,0){$2$}}
\put(399.0,839.0){\rule[-0.200pt]{0.400pt}{4.818pt}}
\put(607.0,131.0){\rule[-0.200pt]{0.400pt}{4.818pt}}
\put(607,90){\makebox(0,0){$4$}}
\put(607.0,839.0){\rule[-0.200pt]{0.400pt}{4.818pt}}
\put(815.0,131.0){\rule[-0.200pt]{0.400pt}{4.818pt}}
\put(815,90){\makebox(0,0){$6$}}
\put(815.0,839.0){\rule[-0.200pt]{0.400pt}{4.818pt}}
\put(1023.0,131.0){\rule[-0.200pt]{0.400pt}{4.818pt}}
\put(1023,90){\makebox(0,0){$8$}}
\put(1023.0,839.0){\rule[-0.200pt]{0.400pt}{4.818pt}}
\put(1231.0,131.0){\rule[-0.200pt]{0.400pt}{4.818pt}}
\put(1231,90){\makebox(0,0){$10$}}
\put(1231.0,839.0){\rule[-0.200pt]{0.400pt}{4.818pt}}
\put(1439.0,131.0){\rule[-0.200pt]{0.400pt}{4.818pt}}
\put(1439,90){\makebox(0,0){$12$}}
\put(1439.0,839.0){\rule[-0.200pt]{0.400pt}{4.818pt}}
\put(191.0,131.0){\rule[-0.200pt]{0.400pt}{175.375pt}}
\put(191.0,131.0){\rule[-0.200pt]{300.643pt}{0.400pt}}
\put(1439.0,131.0){\rule[-0.200pt]{0.400pt}{175.375pt}}
\put(191.0,859.0){\rule[-0.200pt]{300.643pt}{0.400pt}}
\put(-25,290){\rotatebox{90}{Elapsed time [s]}}
\put(815,29){\makebox(0,0){Number $k$ of blocks}}
\put(350,818){\makebox(0,0)[r]{$\ell=10$}}
\put(364.0,818.0){\rule[-0.200pt]{24.090pt}{0.400pt}}
\put(399,204){\usebox{\plotpoint}}
\multiput(399.00,204.58)(1.372,0.499){149}{\rule{1.195pt}{0.120pt}}
\multiput(399.00,203.17)(205.520,76.000){2}{\rule{0.597pt}{0.400pt}}
\multiput(607.00,280.58)(2.996,0.498){67}{\rule{2.477pt}{0.120pt}}
\multiput(607.00,279.17)(202.859,35.000){2}{\rule{1.239pt}{0.400pt}}
\multiput(815.00,315.58)(2.757,0.498){73}{\rule{2.289pt}{0.120pt}}
\multiput(815.00,314.17)(203.248,38.000){2}{\rule{1.145pt}{0.400pt}}
\multiput(1023.00,353.58)(8.228,0.493){23}{\rule{6.500pt}{0.119pt}}
\multiput(1023.00,352.17)(194.509,13.000){2}{\rule{3.250pt}{0.400pt}}
\put(399,204){\makebox(0,0){$+$}}
\put(607,280){\makebox(0,0){$+$}}
\put(815,315){\makebox(0,0){$+$}}
\put(1023,353){\makebox(0,0){$+$}}
\put(1231,366){\makebox(0,0){$+$}}
\put(410,818){\makebox(0,0){$+$}}
\put(350,777){\makebox(0,0)[r]{$\ell=11$}}
\multiput(364,777)(20.756,0.000){5}{\usebox{\plotpoint}}
\put(411,777){\usebox{\plotpoint}}
\put(399,205){\usebox{\plotpoint}}
\multiput(399,205)(19.309,7.612){11}{\usebox{\plotpoint}}
\multiput(607,287)(20.017,5.486){11}{\usebox{\plotpoint}}
\multiput(815,344)(19.465,7.206){10}{\usebox{\plotpoint}}
\multiput(1023,421)(19.245,7.772){11}{\usebox{\plotpoint}}
\put(1231,505){\usebox{\plotpoint}}
\put(399,205){\makebox(0,0){$\times$}}
\put(607,287){\makebox(0,0){$\times$}}
\put(815,344){\makebox(0,0){$\times$}}
\put(1023,421){\makebox(0,0){$\times$}}
\put(1231,505){\makebox(0,0){$\times$}}
\put(410,777){\makebox(0,0){$\times$}}
\sbox{\plotpoint}{\rule[-0.400pt]{0.800pt}{0.800pt}}%
\sbox{\plotpoint}{\rule[-0.200pt]{0.400pt}{0.400pt}}%
\put(350,736){\makebox(0,0)[r]{$\ell=12$}}
\sbox{\plotpoint}{\rule[-0.400pt]{0.800pt}{0.800pt}}%
\put(364.0,736.0){\rule[-0.400pt]{24.090pt}{0.800pt}}
\put(399,219){\usebox{\plotpoint}}
\multiput(399.00,220.41)(1.054,0.501){191}{\rule{1.881pt}{0.121pt}}
\multiput(399.00,217.34)(204.096,99.000){2}{\rule{0.940pt}{0.800pt}}
\multiput(607.00,319.41)(1.147,0.501){175}{\rule{2.029pt}{0.121pt}}
\multiput(607.00,316.34)(203.790,91.000){2}{\rule{1.014pt}{0.800pt}}
\multiput(815.00,410.41)(0.680,0.501){299}{\rule{1.288pt}{0.121pt}}
\multiput(815.00,407.34)(205.328,153.000){2}{\rule{0.644pt}{0.800pt}}
\multiput(1024.41,562.00)(0.500,0.538){409}{\rule{0.121pt}{1.062pt}}
\multiput(1021.34,562.00)(208.000,221.797){2}{\rule{0.800pt}{0.531pt}}
\put(399,219){\makebox(0,0){$\ast$}}
\put(607,318){\makebox(0,0){$\ast$}}
\put(815,409){\makebox(0,0){$\ast$}}
\put(1023,562){\makebox(0,0){$\ast$}}
\put(1231,786){\makebox(0,0){$\ast$}}
\put(410,736){\makebox(0,0){$\ast$}}
\sbox{\plotpoint}{\rule[-0.200pt]{0.400pt}{0.400pt}}%
\put(191.0,131.0){\rule[-0.200pt]{0.400pt}{175.375pt}}
\put(191.0,131.0){\rule[-0.200pt]{300.643pt}{0.400pt}}
\put(1439.0,131.0){\rule[-0.200pt]{0.400pt}{175.375pt}}
\put(191.0,859.0){\rule[-0.200pt]{300.643pt}{0.400pt}}
\end{picture}}
    \textbf{(a)}
  \end{minipage}
  \begin{minipage}[b]{0.5\textwidth}
    \centering
     \resizebox{1\textwidth}{!}{\setlength{\unitlength}{0.240900pt}
\ifx\plotpoint\undefined\newsavebox{\plotpoint}\fi
\sbox{\plotpoint}{\rule[-0.200pt]{0.400pt}{0.400pt}}%
\begin{picture}(1500,900)(0,0)
\sbox{\plotpoint}{\rule[-0.200pt]{0.400pt}{0.400pt}}%
\put(151.0,174.0){\rule[-0.200pt]{4.818pt}{0.400pt}}
\put(131,174){\makebox(0,0)[r]{$20$}}
\put(1419.0,174.0){\rule[-0.200pt]{4.818pt}{0.400pt}}
\put(151.0,259.0){\rule[-0.200pt]{4.818pt}{0.400pt}}
\put(131,259){\makebox(0,0)[r]{$30$}}
\put(1419.0,259.0){\rule[-0.200pt]{4.818pt}{0.400pt}}
\put(151.0,345.0){\rule[-0.200pt]{4.818pt}{0.400pt}}
\put(131,345){\makebox(0,0)[r]{$40$}}
\put(1419.0,345.0){\rule[-0.200pt]{4.818pt}{0.400pt}}
\put(151.0,431.0){\rule[-0.200pt]{4.818pt}{0.400pt}}
\put(131,431){\makebox(0,0)[r]{$50$}}
\put(1419.0,431.0){\rule[-0.200pt]{4.818pt}{0.400pt}}
\put(151.0,516.0){\rule[-0.200pt]{4.818pt}{0.400pt}}
\put(131,516){\makebox(0,0)[r]{$60$}}
\put(1419.0,516.0){\rule[-0.200pt]{4.818pt}{0.400pt}}
\put(151.0,602.0){\rule[-0.200pt]{4.818pt}{0.400pt}}
\put(131,602){\makebox(0,0)[r]{$70$}}
\put(1419.0,602.0){\rule[-0.200pt]{4.818pt}{0.400pt}}
\put(151.0,688.0){\rule[-0.200pt]{4.818pt}{0.400pt}}
\put(131,688){\makebox(0,0)[r]{$80$}}
\put(1419.0,688.0){\rule[-0.200pt]{4.818pt}{0.400pt}}
\put(151.0,773.0){\rule[-0.200pt]{4.818pt}{0.400pt}}
\put(131,773){\makebox(0,0)[r]{$90$}}
\put(1419.0,773.0){\rule[-0.200pt]{4.818pt}{0.400pt}}
\put(151.0,859.0){\rule[-0.200pt]{4.818pt}{0.400pt}}
\put(131,859){\makebox(0,0)[r]{$100$}}
\put(1419.0,859.0){\rule[-0.200pt]{4.818pt}{0.400pt}}
\put(151.0,131.0){\rule[-0.200pt]{0.400pt}{4.818pt}}
\put(151,90){\makebox(0,0){$0$}}
\put(151.0,839.0){\rule[-0.200pt]{0.400pt}{4.818pt}}
\put(366.0,131.0){\rule[-0.200pt]{0.400pt}{4.818pt}}
\put(366,90){\makebox(0,0){$2$}}
\put(366.0,839.0){\rule[-0.200pt]{0.400pt}{4.818pt}}
\put(580.0,131.0){\rule[-0.200pt]{0.400pt}{4.818pt}}
\put(580,90){\makebox(0,0){$4$}}
\put(580.0,839.0){\rule[-0.200pt]{0.400pt}{4.818pt}}
\put(795.0,131.0){\rule[-0.200pt]{0.400pt}{4.818pt}}
\put(795,90){\makebox(0,0){$6$}}
\put(795.0,839.0){\rule[-0.200pt]{0.400pt}{4.818pt}}
\put(1010.0,131.0){\rule[-0.200pt]{0.400pt}{4.818pt}}
\put(1010,90){\makebox(0,0){$8$}}
\put(1010.0,839.0){\rule[-0.200pt]{0.400pt}{4.818pt}}
\put(1224.0,131.0){\rule[-0.200pt]{0.400pt}{4.818pt}}
\put(1224,90){\makebox(0,0){$10$}}
\put(1224.0,839.0){\rule[-0.200pt]{0.400pt}{4.818pt}}
\put(1439.0,131.0){\rule[-0.200pt]{0.400pt}{4.818pt}}
\put(1439,90){\makebox(0,0){$12$}}
\put(1439.0,839.0){\rule[-0.200pt]{0.400pt}{4.818pt}}
\put(151.0,131.0){\rule[-0.200pt]{0.400pt}{175.375pt}}
\put(151.0,131.0){\rule[-0.200pt]{310.279pt}{0.400pt}}
\put(1439.0,131.0){\rule[-0.200pt]{0.400pt}{175.375pt}}
\put(151.0,859.0){\rule[-0.200pt]{310.279pt}{0.400pt}}
\put(0,300){\rotatebox{90}{Peak memory [GB]}}
\put(795,29){\makebox(0,0){Number $k$ of blocks}}
\put(1279,818){\makebox(0,0)[r]{$\ell=10$}}
\put(1299.0,818.0){\rule[-0.200pt]{24.090pt}{0.400pt}}
\put(366,802){\usebox{\plotpoint}}
\multiput(366.58,798.56)(0.500,-0.912){425}{\rule{0.120pt}{0.829pt}}
\multiput(365.17,800.28)(214.000,-388.279){2}{\rule{0.400pt}{0.414pt}}
\multiput(580.00,410.92)(0.797,-0.499){267}{\rule{0.737pt}{0.120pt}}
\multiput(580.00,411.17)(213.470,-135.000){2}{\rule{0.369pt}{0.400pt}}
\multiput(795.00,275.92)(1.563,-0.499){135}{\rule{1.346pt}{0.120pt}}
\multiput(795.00,276.17)(212.206,-69.000){2}{\rule{0.673pt}{0.400pt}}
\multiput(1010.00,206.92)(2.627,-0.498){79}{\rule{2.188pt}{0.120pt}}
\multiput(1010.00,207.17)(209.459,-41.000){2}{\rule{1.094pt}{0.400pt}}
\put(366,802){\makebox(0,0){$+$}}
\put(580,412){\makebox(0,0){$+$}}
\put(795,277){\makebox(0,0){$+$}}
\put(1010,208){\makebox(0,0){$+$}}
\put(1224,167){\makebox(0,0){$+$}}
\put(1349,818){\makebox(0,0){$+$}}
\put(1279,777){\makebox(0,0)[r]{$\ell=11$}}
\multiput(1299,777)(20.756,0.000){5}{\usebox{\plotpoint}}
\put(1399,777){\usebox{\plotpoint}}
\put(366,804){\usebox{\plotpoint}}
\multiput(366,804)(9.945,-18.218){22}{\usebox{\plotpoint}}
\multiput(580,412)(17.578,-11.037){12}{\usebox{\plotpoint}}
\multiput(795,277)(19.763,-6.342){11}{\usebox{\plotpoint}}
\multiput(1010,208)(20.402,-3.813){11}{\usebox{\plotpoint}}
\put(1224,168){\usebox{\plotpoint}}
\put(366,804){\makebox(0,0){$\times$}}
\put(580,412){\makebox(0,0){$\times$}}
\put(795,277){\makebox(0,0){$\times$}}
\put(1010,208){\makebox(0,0){$\times$}}
\put(1224,168){\makebox(0,0){$\times$}}
\put(1349,777){\makebox(0,0){$\times$}}
\sbox{\plotpoint}{\rule[-0.400pt]{0.800pt}{0.800pt}}%
\sbox{\plotpoint}{\rule[-0.200pt]{0.400pt}{0.400pt}}%
\put(1279,736){\makebox(0,0)[r]{$\ell=12$}}
\sbox{\plotpoint}{\rule[-0.400pt]{0.800pt}{0.800pt}}%
\put(1299.0,736.0){\rule[-0.400pt]{24.090pt}{0.800pt}}
\put(366,805){\usebox{\plotpoint}}
\multiput(367.41,798.10)(0.500,-0.915){421}{\rule{0.121pt}{1.662pt}}
\multiput(364.34,801.55)(214.000,-387.551){2}{\rule{0.800pt}{0.831pt}}
\multiput(580.00,412.09)(0.803,-0.501){261}{\rule{1.484pt}{0.121pt}}
\multiput(580.00,412.34)(211.921,-134.000){2}{\rule{0.742pt}{0.800pt}}
\multiput(795.00,278.09)(1.591,-0.501){129}{\rule{2.729pt}{0.121pt}}
\multiput(795.00,278.34)(209.335,-68.000){2}{\rule{1.365pt}{0.800pt}}
\multiput(1010.00,210.09)(2.646,-0.502){75}{\rule{4.376pt}{0.121pt}}
\multiput(1010.00,210.34)(204.918,-41.000){2}{\rule{2.188pt}{0.800pt}}
\put(366,805){\makebox(0,0){$\ast$}}
\put(580,414){\makebox(0,0){$\ast$}}
\put(795,280){\makebox(0,0){$\ast$}}
\put(1010,212){\makebox(0,0){$\ast$}}
\put(1224,171){\makebox(0,0){$\ast$}}
\put(1349,736){\makebox(0,0){$\ast$}}
\sbox{\plotpoint}{\rule[-0.200pt]{0.400pt}{0.400pt}}%
\put(151.0,131.0){\rule[-0.200pt]{0.400pt}{175.375pt}}
\put(151.0,131.0){\rule[-0.200pt]{310.279pt}{0.400pt}}
\put(1439.0,131.0){\rule[-0.200pt]{0.400pt}{175.375pt}}
\put(151.0,859.0){\rule[-0.200pt]{310.279pt}{0.400pt}}
\end{picture}}
     \textbf{(b)}
  \end{minipage}
  \caption{Elapsed time and peak memory usage using increasing $k$ blocks of the entire human genome for $\ell=10,11,12$; notice that the peak memory usage is the same for all values of $\ell$.}
  \label{exp}
\end{figure}

\begin{comment}
\section{Final Remarks}

In this paper we presented a new technique for constructing antidictionaries in output-sensitive space. The importance of our contribution is underlined by the following:
\begin{enumerate}
    \item Any space-efficient algorithm designed for global data structures (such as~\cite{DBLP:conf/spire/FujishigeTH18}) can be directly applied to the $k$ blocks in our technique to further reduce the working space.
    \item There is a connection between MAWs and other word regularities~\cite{SPIRE2018}. Our technique could potentially be applied to computing these regularities in output-sensitive space.
    \item Our technique could serve as a basis for a new parallelisation scheme for constructing antidictionaries (see also~\cite{DBLP:conf/ppam/BartonHMP15}), in which several blocks are processed concurrently.
\end{enumerate}
\end{comment}

\bibliographystyle{plain}
\bibliography{refs}

\end{document}